\newcommand*{\logeq}{\Leftrightarrow}
\pgfplotsset{compat=1.16}
\newtheorem{remark}{Remark}
\newtheorem{lemma}{Lemma}
\newtheorem{definition}{Definition}
\newtheorem{corollary}{Corollary}
\newtheorem{assumption}{Assumption}
\newtheorem{theorem}{Theorem}
\newtheorem{example}{Example}
\newcommand{\bx}{{\boldsymbol{x}}}
\newcommand{\iin}{{\mathcal{I}_N}}
\newcommand{\pn}{{\mathcal{P}_{\boldsymbol{x}}}}
\newcommand{\qn}{{\mathcal{Q}_{\boldsymbol{x}}}}
\newcommand{\pu}{{\mathcal{P}_{\boldsymbol{u}}}}
\newcommand{\qu}{{\mathcal{Q}_{\boldsymbol{u}}}}
\newcommand{\ix}{{\mathcal{I}_{\boldsymbol{x}}}}
\newcommand{\iu}{{\mathcal{I}_{\boldsymbol{u}}}}
\newcommand{\bu}{{\boldsymbol{u}}}
\newcommand{\by}{{\boldsymbol{y}}}
\newcommand{\bz}{{\boldsymbol{z}}}
\newcommand{\bv}{{\boldsymbol{v}}}
\title{\LARGE \bf
Universal Barrier Functions for Safety and Stability of Constrained Nonlinear Systems
}
\author{Vrushabh Zinage, \and Efstathios Bakolas
\thanks{Vrushabh Zinage (PhD Candidate) and Efstathios Bakolas (Associate Professor) are with the Department of Aerospace Engineering and Engineering Mechanics,
The University of Texas at Austin, Texas 78712-1221, USA, 
{\tt\small \{vrushabh.zinage,bakolas\}@austin.utexas.edu}}
}
\begin{document}

\bibliographystyle{IEEEtran} 

\maketitle

\begin{abstract}
In this paper, we address the problem of synthesizing safe and stabilizing controllers for nonlinear systems subject to complex safety specifications and input constraints. We introduce the Universal Barrier Function (UBF), a single continuously differentiable scalar-valued function that encodes both stability and safety criteria while accounting for input constraints. Using the UBF, we formulate a Quadratic Program (UBF-QP) to generate control inputs that are both safe and stabilizing under input constraints. We demonstrate that the UBF-QP is feasible if a UBF exists. Furthermore, under mild conditions, we prove that a UBF always exists. The proposed framework is then extended to systems with higher relative degrees. Finally, numerical simulations illustrate the effectiveness of our proposed approach. The code is available at \href{https://github.com/Vrushabh27/ubf}{\textbf{https://github.com/Vrushabh27/ubf}}
\end{abstract}
\begin{IEEEkeywords}
   Safety, Stability, Input Constraints, High Order systems
\end{IEEEkeywords}

\section{Introduction\label{sec:intro}}
Model Predictive Control (MPC) has been widely adopted for control design in various real-world applications, including quadrotors \cite{wang2021efficient_mpc_quadrotor_1,bangura2014real_mpc_quadrotor_2}, legged robots \cite{rathod2021model_mpc_legged_1,farshidian2017real_mpc_legged_2,zinage2024transformermpc_legged_3}, humanoid robots \cite{katayama2023model_mpc_humanoids_1,best2016new_mpc_humanoid_2}, multi-agent systems  and manipulators \cite{kleff2021high_mpc_manipulator_1}. However, in recent years, Control Barrier Function-based Quadratic Programs (CBF-QP) have emerged as a promising alternative to MPC based controllers, that offers a computationally efficient synthesis of safe control inputs. CBF-QP is computationally efficient compared to MPC because it formulates control synthesis as a simpler quadratic program that directly enforces safety constraints without requiring iterative optimization over a prediction horizon, thereby reducing computational complexity. 

CBF-QP\cite{ames2014control_bf_6} and Control Lyapunov Function-Control Barrier Function based Quadratic Programs (CLF-CBF-QP) \cite{ames2014control_bf_6} approaches have gained popularity for generating safe and stabilizing controllers for control-affine nonlinear systems. Their applicability covers systems with input constraints \cite{agrawal2021safe_input_constrained_1,zinage2023neural_input_constrained_2}, higher relative degrees\cite{xiao2021high_order_cbf_1,tan2021high_order_2,zinage2024disturbance_high_order}, hybrid systems\cite{marley2024hybrid_systems_1,lindemann2021learning_hybrid_2}, unknown nonlinear systems \cite{zinage2023neural_unknown,zinage2023neural_unknown_3,jagtap2020control_unknown_2}, sampled data systems\cite{breeden2021control_sampled_data_systems_1,niu2021safety_sampled_data_systems_2,taylor2022safety_sampled_data_systems_3,oruganti2023robust_sampled_data_systems_4,zinage2024decentralized_multi_agent_systems_5}, input-delay systems\cite{jankovic2018control_input_delay_1}, and multi-agent systems\cite{jankovic2023multiagent_systems_1,lindemann2019control_multi_agent_systems_2,wang2017safety_multi_agents_3,glotfelter2017non_smooth_muliti_agent_systems_4}. These methods have also found practical applications in robot manipulation\cite{cortez2019control_robotic_manipulation_1}, bipedal robotics\cite{hsu2015control_bipedal_1}, and verification and control \cite{clark2021verification}.

Most CLF-CBF-QP-based methods ensure the feasibility of the Quadratic Program (QP) by introducing slack variables for the Control Lyapunov Function (CLF) condition or both CLF and CBF conditions. Slack variables are additional terms that relax certain constraints, allowing for a feasible solution even when strict adherence to the original conditions is not possible. This helps ensure that a feasible control input can be found, although it may come at the cost of reduced stability or safety guarantees. To address this limitation, \cite{romdlony2016stabilization_clbf} integrates a CLF and a CBF to construct Control Lyapunov-Barrier Function (CLBF). This method then employs Sontag's universal formula to generate a smooth controller. However, \cite{braun2017existence,braun2020existence_comment} shows that the CLBF does not exist under the assumptions presented in  \cite{romdlony2016stabilization_clbf} and does not guarantee stability or convergence to the desired equilibrium point (the origin). On the other hand, if constraints related to CLF and CBF conditions are strictly enforced for stability and safety, the resulting inputs synthesized by solving the resulting CLF-CBF-QP can be non-Lipschitz (as proven via Robinson's counterexample \cite{robinson1982generalized_robinson}) or even infeasible. This consequently questions the well posedness of the closed-loop system. Moreover, in the presence of input constraints, even with known CLF and CBF, pointwise input feasibility for a given state cannot be guaranteed, as there may not exist a control input that simultaneously satisfies both input constraints and CLF (or CBF) conditions. This challenge is further compounded when dealing with complex safety specifications and systems with input constraints and higher relative degrees with respect to the CBF.

Several studies have addressed multiple safety constraints by employing various approaches. These include imposing multiple CBF constraints on inputs\cite{rauscher2016constrained_multiple_cbf_1}, switching between CBFs with non-intersecting super-level sets\cite{xu2018constrained_multiple_cbf_2}, and ensuring feasible inputs using multiple CBFs\cite{cortez2021robust_multiple_cbf_3} or Lyapunov barrier functions\cite{breeden2023compositions_multiple_cbf_4,liu2019barrier_multiple_cbf_5}. However, a key limitation of these works is their focus on the intersection of safety sets, which can be overly conservative, especially when dealing with complex safety and input constraints (such as when overall constraints are represented by unions and intersections of simpler safe sets). The primary motivation for this work is to develop a method that can simultaneously ensure stability, safety, and input constraint satisfaction, while also designing inputs for systems with higher relative degrees and general nonlinear controlled systems (including control-affine systems as a special case). Unlike existing approaches, the proposed method provides practical implementation benefits by reducing conservatism in control design and offering improved feasibility under input constraints. Additionally, it provides theoretical guarantees of stability and safety by utilizing a single scalar-valued, continuously differentiable function.
The contributions of the paper are as follows.
\begin{enumerate}
    \item We propose the notion of scalar-valued continuously differentiable Universal Barrier Functions (UBF) that unify the notion of stability and safety for input constrained nonlinear systems (Definition \ref{eqn:ubf}).
    \item Next, we formulate a UBF-based quadratic program (UBF-QP) to synthesize safe and stabilizing control inputs under given complex state and input constraints specifications. Furthermore, under these specifications, we provide sufficient conditions for the feasibility of UBF-QP (Theorem \ref{thm:ubf_qp_feasible}).
    \item Under some mild conditions, we provide sufficient conditions for the existence of a UBF.
    \item We propose High Order UBF that extends the notion of UBF for systems with higher relative degrees. Furthermore, we show that High-Order UBF is more general than the notions of High Order CBF's proposed in the literature \cite{xiao2021high_order_cbf_1,tan2021high_order_2}.
\end{enumerate}

The overall structure of the paper is as follows. In Section \ref{sec:nomenclature}, we present the nomenclature that we use in this paper followed by preliminaries in Section \ref{sec:prelim}. Section \ref{sec:motivation} presents the motivation for this work followed by the main contributions in Section \ref{sec:main_results}. Finally, Section \ref{sec:numerical_simulations} presents numerical simulations followed by concluding remarks in Section \ref{sec:conclusion}.
\section{Nomenclature\label{sec:nomenclature}}
Vectors and matrices are denoted by bold and capital letters, respectively. The sets of real numbers, non-negative real numbers, and positive integers are denoted by $\mathbb{R}$, $\mathbb{R}^+$, and $\mathbb{Z}_{>0}$, respectively. We denote by $\mathcal{I}_{(a,b)} := \{a, a+1, \dots, b\}$ where $a$ and $b (\geq a)$ are integers. For brevity, we denote by $\mathcal{I}_{N} := \mathcal{I}_{(0,N)}$ for some non-negative integer $N$. The $n$-dimensional zero vector is denoted by $\mathbf{0}_n$. Denote two disjoint sets by $\pn$ and $\qn$ such that $\pn \cup \qn = \iin$.
Furthermore, given $\pn$ and $\qn$, we define the operator $\oplus_i$ for $i\in \iin$ as follows
\begin{align}
     \oplus_i = \begin{cases} \cup  & \text { if } i\in \pn \\  \cap  & \text { if } i\in \qn
     \end{cases}\nonumber
    \end{align}
For any discrete and finite set $\mathcal{A}$, $|\mathcal{A}|$ denotes its cardinality. In addition, $P(\mathcal{A})$ denotes the power set of $\mathcal{A}$. The notation $\|\cdot\|$ denotes the Euclidean norm of a vector. $\operatorname{Int}(\mathcal{S})$ and $\partial \mathcal{S}$ denote the interior and boundary of the set $\mathcal{S}$, respectively. $L_f V$ denotes the Lie derivative of a scalar-valued differentiable function $V$ with respect to the function $f$.
For a scalar $a \in \mathbb{R}$, a function $\alpha: [0, a) \rightarrow [0, \infty)$ is of class $\mathcal{K}$ if it is continuous, strictly increasing, and $\alpha(0) = 0$. A function $\alpha: [0, \infty) \rightarrow [0, \infty)$ is of class $\mathcal{K}_\infty$ if it is of class $\mathcal{K}$ and $\underset{{r \rightarrow \infty}}{\lim}\alpha(r) = \infty$. A continuous function $\beta: [0, a) \times [0, \infty) \rightarrow [0, \infty)$ is of class $\mathcal{KL}$ if, for each fixed $s$, $\beta(\cdot, s)$ is of class $\mathcal{K}$, and for each fixed $r$, $\beta(r, \cdot)$ is decreasing with $\lim_{s \rightarrow \infty} \beta(r, s) = 0$. A scalar-valued function $V : \mathbb{R}^n \rightarrow \mathbb{R}^+$ is said to be positive definite if there exists a class $\mathcal{K}$ function $\alpha$ such that $V(\bx) \ge \alpha(|\bx|)$ for $\bx \in \mathbb{R}^n$.
For a scalar-valued function $c : \mathbb{R} \rightarrow \mathbb{R}$, $c^{-1}(r)$ denotes its inverse, i.e., $c(c^{-1}(r)) = 1$ for some $r \in \mathbb{R}$. In addition, $(c(r))^{-1} = \frac{1}{c(r)}$.


\section{Preliminaries\label{sec:prelim}}
Consider the nonlinear system given by
\begin{align}    
\dot{\bx}=F(\bx,\bu),\quad \bx(0)=\bx_0
\label{eqn:nonlinear_system_dynamics}
\end{align}
where $\bx\in\mathbb{R}^n$, $\bu\in\mathcal{U}\subset\mathbb{R}^m$ ($\mathcal{U}$ is a compact set) and $F:\mathbb{R}^n\times\mathbb{R}^m\rightarrow\mathbb{R}^n$
is continuously differentiable. In addition, we denote by $\bu^j$ $(j\in \mathcal{I}_{(1,m)})$, the $j^{\text{th}}$ component of $\bu$. In addition, for defining the notion of CLF and CBF, we consider the following control-affine nonlinear system as
\begin{align}    
\dot{\bx}=f(\bx)+g(\bx)\bu,\quad \bx(0)=\bx_0
\label{eqn:nonlinear_system_dynamics_affine}
\end{align}
where $f:\mathbb{R}^n\rightarrow\mathbb{R}^n$ and $g:\mathbb{R}^n\rightarrow\mathbb{R}^{n\times m}$ are Lipschitz continuous functions. We assume the origin to be the unique equilibrium point for the (unforced) systems \eqref{eqn:nonlinear_system_dynamics} and \eqref{eqn:nonlinear_system_dynamics_affine}. For brevity, we drop the time indexing of $\bx$ and $\bu$ wherever it is not required. In the following, we briefly discuss the notions of CLF and CBF.
\begin{definition}
    \normalfont \textbf{(CLF)} Given an open set $\mathcal{D}\subset\mathbb{R}^n$ and $\mathbf{0}_n\in\mathcal{D}$,  a scalar valued continuously differentiable function $V:\mathbb{R}^n\rightarrow\mathbb{R}^+$ is said to be a Control Lyapunov Function (CLF) if $V$ satisfies the following conditions:
    \begin{itemize}
        \item $V$ is a positive definite function
        \item $V$ is proper in $\mathcal{D}$, that is, the set $\{\bx\in\mathcal{D}:\;V(\bx)\leq c\}$ is a compact set for all $c>0$
        \item there exist a positive definite function $P(\bx):\mathbb{R}^n\rightarrow\mathbb{R}^+$ and $\bu\in\mathbb{R}^m$ for all $\bx\in\mathcal{D}\setminus\{\mathbf{0}_n\}$ such that 
        \begin{align}
           L_fV(\bx)+L_gV(\bx)\bu\leq -P(\bx) 
           \label{eqn:lyap_cond}
\end{align}
    \end{itemize}
    \label{def:clf}
\end{definition}
Finally, let the set $\mathcal{K}_{\text{CLF}}(\bx)$ be defined as follows:
\begin{align}
\mathcal{K}_{\text{CLF}}(\bx)=\{\bu\in\mathbb{R}^m|\;L_fV(\bx)+L_gV(\bx)\bu\leq -P(\bx)\}
\label{eqn:kclf_set}
\end{align}
Given a CLF $V$, a stabilizing feedback controller $k(\bx):\mathbb{R}^n\rightarrow \mathbb{R}^m$ can be synthesized by the Universal Sontag’s formula as follows:
\begin{align}
    k(\bx)= \begin{cases}-\frac{L_fV+\sqrt{\left(L_fV\right)^2+\left(L_gV\right)^4}}{L_gV} & L_gV \neq 0 \\ 0 & L_gV=0\end{cases}
    \label{eqn:sontags_formula}
\end{align}
It can be shown that $k(\bx)$ is Lipschitz continuous \cite{sontag1989universal}. Furthermore, if the Small Control Property (SCP) holds \cite{sontag1989universal}, the feedback controller is continuously differentiable and consequently, the existence of a CLF becomes a necessary and sufficient condition for asymptotic stability. If the function $P(\bx)$ is a $\mathcal{K}$ function $\gamma(V(\bx))$ in addition to a positive definite function, then any Lipschitz controller $\bu\in \mathcal{K}_{\text{CLF}}(\bx)$ guarantees exponential stability \cite{freeman2008robust_clf_exp}.
\begin{definition}
    \normalfont \textbf{(CBF)} Consider a scalar valued continuously differentiable function $h:\mathbb{R}^n\rightarrow\mathbb{R}$ such that the following holds true
    \begin{enumerate}
        \item $h(\bx)>0$ for $\bx\in \operatorname{Int}\mathcal{S}$, $h(\bx)<0$ for $\bx\in \mathbb{R}^n\setminus\mathcal{S}$ and $h(\bx)=0$ for $\bx\in \partial \mathcal{S}$
        \item For \eqref{eqn:nonlinear_system_dynamics_affine}, there exist a class $\mathcal{K}_\infty$
        function $\alpha$ such that the following holds true
        \begin{align}
            \underset{\bu\in\mathbb{R}^m}{\sup}\;L_fh(\bx)+L_gh(\bx)\bu\geq -\alpha(h(\bx))
            \label{eqn:cbf_condition}
        \end{align}
    \end{enumerate}
    Then, the function $h$ is said to be a Control Barrier Function (CBF).
\end{definition}
Finally, let the set $\mathcal{K}_{\text{CBF}}(\bx)$ be defined as follows:
\begin{align}
\mathcal{K}_{\text{CBF}}(\bx)=\{\bu\in\mathbb{R}^m|\;L_fh(\bx)+L_gh(\bx)\bu\geq -\alpha(h(\bx))\}
\label{eqn:kcbf_set}
\end{align}
\begin{theorem}
\normalfont \cite{ames2014control_bf_6} Any Lipschitz continuous controller $\bu\in \mathcal{K}_{\text{CBF}}(\bx)$, renders the set $\mathcal{S}$ forward invariant.
\label{thm:cbf}
\end{theorem}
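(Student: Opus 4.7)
The plan is to combine local existence/uniqueness of the closed-loop trajectory with a scalar differential inequality argument. Since $f,g$ are Lipschitz continuous and the feedback $\bu(\bx)$ is Lipschitz by assumption, the composite vector field $f(\bx)+g(\bx)\bu(\bx)$ is locally Lipschitz, so for any $\bx_0\in\mathcal{S}$ there exists a unique maximal solution $\bx(t)$ of \eqref{eqn:nonlinear_system_dynamics_affine} on some interval $[0,T_{\max})$. The task then reduces to showing $h(\bx(t))\geq 0$ for all $t\in[0,T_{\max})$.

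Next, I would introduce the scalar function $\phi(t):=h(\bx(t))$ and differentiate along trajectories. Because $\bu(\bx(t))\in\mathcal{K}_{\text{CBF}}(\bx(t))$ by hypothesis, the chain rule together with \eqref{eqn:kcbf_set} gives the differential inequality
$$\dot{\phi}(t) \;=\; L_fh(\bx(t)) + L_gh(\bx(t))\,\bu(\bx(t)) \;\geq\; -\alpha(\phi(t)).$$
Consider the comparison ODE $\dot{y}=-\alpha(y)$ with $y(0)=h(\bx_0)\geq 0$. Since $\alpha\in\mathcal{K}_\infty$ satisfies $\alpha(0)=0$, $y\equiv 0$ is an equilibrium of this scalar ODE, and any solution starting from $y(0)\geq 0$ stays non-negative for all forward time (it is monotonically non-increasing and bounded below by $0$). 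By the comparison lemma applied to the inequality $\dot{\phi}\geq -\alpha(\phi)$ against $\dot{y}=-\alpha(y)$ with matched initial data, we conclude $\phi(t)\geq y(t)\geq 0$ for all $t\in[0,T_{\max})$. This yields $h(\bx(t))\geq 0$, hence $\bx(t)\in\mathcal{S}$, which is precisely forward invariance.

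The main obstacle is a subtle regularity issue: the comparison lemma in its most convenient form requires the right-hand side of the scalar ODE to be locally Lipschitz, whereas $\alpha\in\mathcal{K}_\infty$ is only guaranteed to be continuous and strictly increasing. I would handle this by appealing to the version of the comparison principle that compares $\phi$ with the maximal solution of $\dot{y}=-\alpha(y)$, which exists by Peano's theorem and remains non-negative for the reasons above. An alternative, self-contained route is to invoke Nagumo's theorem: at every boundary point of $\mathcal{S}$ one has $h(\bx)=0$, so the CBF condition together with the Lipschitz feedback yields $\dot{h}=L_fh+L_gh\,\bu\geq -\alpha(0)=0$, meaning the closed-loop vector field lies in the tangent cone to $\mathcal{S}$; since $h\in C^1$ ensures $\mathcal{S}$ is locally sublevel-set regular, Nagumo's criterion immediately implies forward invariance without any comparison principle. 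Either route completes the proof.
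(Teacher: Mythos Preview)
The paper does not supply its own proof of this theorem; it is quoted as a preliminary result from \cite{ames2014control_bf_6} and immediately followed by Assumption~\ref{assumption:stabilizing_assum_1}. Your argument is correct and is in fact the standard route: the comparison-lemma step you carry out (compare $\phi(t)=h(\bx(t))$ against $\dot y=-\alpha(y)$, conclude $\phi(t)\ge y(t)\ge 0$) is exactly the mechanism the paper later invokes verbatim in its proof of forward invariance for the UBF set $\mathcal{S}_h$, citing Lemma~4.4 and the Comparison Lemma of \cite{khalil1996nonlinear}. Your attention to the regularity of $\alpha$ and the Nagumo-based alternative are both sound and go slightly beyond what the paper spells out.
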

\begin{assumption}
    \normalfont We assume that the system \eqref{eqn:nonlinear_system_dynamics}, there exists a stabilizing and a safe controller.
    \label{assumption:stabilizing_assum_1}
\end{assumption}
\begin{definition}
\normalfont (\textbf{Relative Degree}) A function $h(\bx)$ is said to be of relative degree $m> 1$, if the following holds true
\begin{align}
&L^i_gh(\bx)\bu=0,\;\forall\;\;i\in \mathcal{I}_{(1,m-1)}\;\text{and}\;\forall\bx\in\mathbb{R}^n\nonumber\\
& L^m_gh(\bx)\bu\neq 0,\;\forall\;\;\;\bx\in\mathbb{R}^n\nonumber
\end{align}
\label{defn:relative_degree}
\end{definition}
For systems with a relative degree greater than one, the notion of HO-CBF is presented next.
{
\begin{definition}
    \normalfont \textbf{High Order CBF (HO-CBF)} \cite{xiao2021high_order_cbf_1,tan2021high_order_2}
Consider the system \eqref{eqn:nonlinear_system_dynamics_affine}  with relative degree $m>1$. Let the sequence of functions be defined recursively as $\psi^{i+1}(\boldsymbol{x}) = \dot{\psi}^i(\boldsymbol{x}) + \alpha_i(\psi^i(\boldsymbol{x}))$ for $i\in[0,m-1]_d$, $\psi^0(\boldsymbol{x}) = h(\bx)$ and $\alpha_i$ for $i\in[0,m]_d$ be class $\mathcal{K}_\infty$ functions. We then define a finite collection of sets $C_1(t), C_2(t), \ldots, C_m(t)$ as $C_{i+1}(t) = \{\boldsymbol{x} \in \mathbb{R}^n: \psi^i(\boldsymbol{x}) \geq 0\}$ for $i\in[0,m-1]_d$. A function $h: \mathbb{R}^n \rightarrow \mathbb{R}$ is a HO-CBF, if there exists a differentiable class $\mathcal{K}_\infty$ functions $\alpha_i$ ($i\in[1,m]_d$) such that
    \begin{align}
        \dot{\psi}^m(\bx)\geq -\alpha^m(\psi^m(\bx))\quad\quad\forall \bx\in \cap_{i=1}^mC_i
        \label{eqn:HO-CBF_condition}
    \end{align}
\label{def:high_order_cbf}
\end{definition}
For a given HO-CBF $h$, we define the set of control inputs satisfying the HO-CBF condition \eqref{eqn:HO-CBF_condition} for a given $\bx\in\mathbb{R}^n$:
\begin{align}
\mathcal{K}_{\text{HO-CBF}}(\bx) = \left\{\boldsymbol{u} \in \mathbb{R}^m: \dot{\psi}^m(\bx)\geq -\alpha^m(\psi^m(\bx))\right\}
\end{align}
where $\alpha^m$ is a class $\mathcal{K}_\infty$ function.
\begin{theorem}
    \normalfont \cite{xiao2021high_order_cbf_1,tan2021high_order_2} Let $h(\bx)$ be a HO-CBF $h(\bx)$ for system \eqref{eqn:nonlinear_system_dynamics_affine}. If $\boldsymbol{x}_0\in \cap_{i=1}^mC_i$, then any Lipschitz continuous controller $\boldsymbol{u} \in \mathcal{K}_{\text{HO-CBF}}(\bx)$ ensures that the intersection of these sets remains forward invariant for system \eqref{eqn:nonlinear_system_dynamics}.
\end{theorem}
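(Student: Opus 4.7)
The plan is to prove forward invariance by applying the comparison lemma recursively through the chain of auxiliary functions $\psi^0,\psi^1,\dots,\psi^m$, working from the top index downward. Concretely, I would first isolate a scalar comparison statement of the form: if a locally absolutely continuous scalar signal $y(t)$ satisfies $\dot{y}(t)\ge -\alpha(y(t))$ almost everywhere on its interval of existence with $\alpha\in\mathcal{K}_\infty$, and $y(0)\ge 0$, then $y(t)\ge 0$ for all $t$ in that interval. This is a standard consequence of comparing $y$ with the solution of $\dot{z}=-\alpha(z),\,z(0)=y(0)$, which is monotone and remains at $0$ once it reaches it, combined with a Gr\"onwall-type argument.

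Next I would apply this comparison statement at the top of the chain. Substituting any Lipschitz $\bu\in\mathcal{K}_{\text{HO-CBF}}(\bx)$ into the dynamics, the HO-CBF condition gives $\dot{\psi}^m(\bx(t))\ge -\alpha_m(\psi^m(\bx(t)))$ along the closed-loop trajectory as long as $\bx(t)\in\bigcap_{i=1}^m C_i$. Together with $\psi^m(\bx_0)\ge 0$ (which follows from $\bx_0\in\bigcap_{i=1}^m C_i$ via the recursion, since $\psi^m(\bx_0)=\dot\psi^{m-1}(\bx_0)+\alpha_{m-1}(\psi^{m-1}(\bx_0))$ and the invariance hypothesis), the comparison lemma yields $\psi^m(\bx(t))\ge 0$ on the interval of existence.

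Then I would run a finite downward induction on $i$ from $m-1$ to $0$. Unwinding the defining recursion $\psi^{i+1}=\dot{\psi}^{i}+\alpha_i(\psi^i)$ gives, along trajectories,
\begin{equation*}
\dot{\psi}^i(\bx(t)) \;=\; \psi^{i+1}(\bx(t)) - \alpha_i(\psi^i(\bx(t))) \;\ge\; -\alpha_i(\psi^i(\bx(t))),
\end{equation*}
where the inequality uses the inductive hypothesis $\psi^{i+1}(\bx(t))\ge 0$. Since $\bx_0\in C_{i+1}$ implies $\psi^i(\bx_0)\ge 0$, the scalar comparison lemma yields $\psi^i(\bx(t))\ge 0$ for all $t$ in the interval of existence, closing the induction and hence establishing $\bx(t)\in\bigcap_{i=1}^m C_i$ for all $t\ge 0$.

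The delicate point, and what I expect to be the main obstacle, is the apparent circularity that the HO-CBF inequality is only assumed on $\bigcap_{i=1}^m C_i$, yet we need that inequality to conclude the trajectory never leaves this intersection. I would dispose of this by a standard continuity/maximality argument: let $T^*=\sup\{T\ge 0:\bx(t)\in\bigcap_{i=1}^m C_i\ \forall t\in[0,T]\}$; by continuity of $\bx(\cdot)$ and closedness of each $C_i$ we have $T^*>0$ and $\bx(T^*)\in\bigcap_{i=1}^m C_i$, so the inductive comparison argument applies on $[0,T^*]$ and, if $T^*$ were finite with the trajectory strictly exiting, we would contradict the just-derived $\psi^i(\bx(T^*))\ge 0$. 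This upgrades the local invariance to the full interval of existence of the closed-loop solution, which is globally extendable because $\bu$ is Lipschitz and the trajectory is confined to the (closed) safe intersection.
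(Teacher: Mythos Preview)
The paper does not supply its own proof of this theorem; it is stated as a cited result from \cite{xiao2021high_order_cbf_1,tan2021high_order_2}. Your downward-induction argument via the comparison lemma is the standard proof strategy used in those references, and it is also precisely the approach the paper itself adopts when proving its analogous HO-UBF forward-invariance theorem later on (where the chain $\Pi^0,\dots,\Pi^m$ plays the role of $\psi^0,\dots,\psi^m$): start from $\Pi^m\ge 0$, unwind the recursion to get $\dot{\Pi}^{m-1}+\alpha^{m-1}(\Pi^{m-1})\ge 0$, and iterate down. So your route matches both the cited sources and the paper's own parallel argument.

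Your treatment is in fact more careful than the paper's analogous proof on one point: you explicitly flag and resolve the apparent circularity that the HO-CBF inequality is only assumed on $\bigcap_i C_i$, handling it with a maximality/continuity argument on $T^*$. The paper's HO-UBF proof simply asserts the implication without addressing this, so your version is tighter. One small quibble: your justification that $\psi^m(\bx_0)\ge 0$ ``follows from $\bx_0\in\bigcap_{i=1}^m C_i$ via the recursion'' is slightly garbled --- in the paper's indexing $C_{i+1}=\{\bx:\psi^i(\bx)\ge 0\}$, so membership in $C_m$ gives $\psi^{m-1}(\bx_0)\ge 0$, and one needs either that $C_{m+1}$ is also part of the intersection (as in the cited references) or that the HO-CBF condition is used at $t=0$ to get $\psi^m(\bx_0)\ge 0$; this is a bookkeeping issue with the paper's definition rather than a flaw in your strategy.
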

}

\section{Motivation\label{sec:motivation}}
In this section, we present the motivation for our contributions by examining the general CLF-CBF-QP problem used to synthesize safe and stabilizing feedback controllers:

\begin{subequations}
\begin{align}
\underset{\bu\in\mathbb{R}^m}{\min}\;\;&\|\bu-\bu_0\|^2+p\delta_1^2+q\delta_2^2\\
\text{s.t.}\;\;&L_fV(\bx)+L_gV(\bx)+\gamma(V(\bx))\leq \delta_1\label{eqn:clf_condition}\\
&L_fh(\bx)+L_gh(\bx)+\alpha(h(\bx))\geq \delta_2\label{eqn:cbf_condition_1}
\end{align}
\label{eqn:clf_cbf_qp}
\end{subequations}
where $\bu_0$ is a given nominal controller, $p\geq 0$, $q\geq 0$, and $\delta_1,\delta_2\in\mathbb{R}$ are slack variables that ensures QP feasibility. While this formulation guarantees feasibility, it does not ensure that the synthesized control inputs are safe and stabilizing due to the inclusion of slack variables. Specifically, the stability condition from CLF (Eq. \eqref{eqn:clf_condition} with $\delta_1=0$) and the safety condition from CBF (Eq. \eqref{eqn:cbf_condition_1} with $\delta_2=0$) can be violated when $\delta_1>0$ and $\delta_2<0$, respectively. 
{
The following counterexample shows that there might not exist any Lipschitz controller that solves \eqref{eqn:clf_cbf_qp}. 

\textbf{Robinson's counterexample: \cite{robinson1982generalized_robinson}}
Consider the following QP:
\begin{align}
\mathcal{P}(\bx)= \begin{cases}\underset{\bu \in \mathbb{R}^4}{\operatorname{argmin}} & \frac{1}{2} \bu^\mathrm{T} \bu \\ \text { s.t. } & A(\bx) \bu \geq b(\bx)\end{cases}
\label{eqn:robin_qp}
\end{align}
where $\bx=[x_1, x_2]^\mathrm{T} \in \mathbb{R}^2$ and the matrices $A(\bx)$ and $b(\bx)$ are given by:
$$
A(\bx)=\left[\begin{array}{rrll}
0 & -1 & 1 & 0 \\
0 & 1 & 1 & 0 \\
-1 & 0 & 1 & 0 \\
1 & 0 & 1 & x_1
\end{array}\right],\;\; b(\bx)=\left[\begin{array}{c}
1 \\
1 \\
1 \\
1+x_2
\end{array}\right].
$$
For the domain $0<x_1<1$ and $ 0 \leq x_2 \leq \frac{1}{2} x_1^2$, the unique minimizer to \eqref{eqn:robin_qp} can be expressed in closed form as:
$$
\bu(\bx)=[0,0,1,0]^\mathrm{T}+\frac{x_2}{x_1}[0,0,0,1]^\mathrm{T}
$$
At the point $x_1 = x_2 = 0$, the solution is given by $\bu(\bx) = [0, 0, 1, 0]^\mathrm{T}$. It is noteworthy that, although the existence and uniqueness of the minimizer $\bu(\bx)$ are guaranteed for all values of $\bx = [x_1, x_2]^\mathrm{T}$, the minimizer itself fails to exhibit Lipschitz continuity in any open set containing the origin.
In addition, our contributions are motivated by addressing the following four distinct scenarios, assuming that a CLF and CBF are known apriori. These scenarios highlight the limitations of current approaches and establish the need for our proposed method.

\textit{Case 1:} To ensure that control inputs synthesized via \eqref{eqn:clf_cbf_qp} guarantee both safety and stability, we must set $\delta_1=\delta_2=0$. However, this approach can lead to infeasibility of the QP at certain states. Specifically, the QP becomes infeasible when the intersection of the sets of control inputs satisfying the CLF and CBF conditions is empty, i.e., $\mathcal{K}_{\text{CLF}}(\bx)\cap\mathcal{K}_{\text{CBF}}(\bx)=\varnothing$.

\textit{Case 2}: In \eqref{eqn:clf_cbf_qp}, consider the case where $\delta_1=\delta_2=0$ and there is an additional constraint $\bu\in\mathcal{U}$. At a given $\bx$,
there might not exist a $\bu$ such that \eqref{eqn:clf_cbf_qp} is feasible. In other words, forward invariance of $\mathcal{U}$ is not guaranteed. 
{
Most methods in literature, that consider adding input constraints in the CLF-CBF-QP framework, add a slack variable to the cost function to make the QP feasible but lack guarantees of safety/stability \cite{zeng2021safety_feasibility_input_constraints} or either only consider box constraints for inputs which can be conservative \cite{xiao2022sufficient_cbf}. Furthermore, \cite{xiao2022sufficient_cbf} provides only sufficient conditions for the feasibility of CBF-QP based methods.

}

\textit{Case 3}: If the state constraint specification is given by $\mathcal{S}=\mathcal{S}_1\cup\mathcal{S}_2$, the CLF-CBF-QP can be modified as 
\begin{subequations}
\begin{align}
\underset{\bu\in\mathbb{R}^m}{\min}\;\;&\|\bu-\bu_0\|^2\\
\text{s.t.}\;\;&L_fV(\bx)+L_gV(\bx)+\gamma(V(\bx))\leq 0\\
&L_fh_i(\bx)+L_gh_i(\bx)+\alpha_i(h_i(\bx))\geq 0,\;\;\forall\;\;i\in \mathcal{I}_{(1,2)}\label{eqn:two_cbf_conditions}
\end{align}
\label{eqn:clf_cbf_qp_complex_state_constraints}
\end{subequations}
However, as the state constraint specification is a union of two safe sets, if the QP is feasible both the CBF conditions \eqref{eqn:two_cbf_conditions} are satisfied, which can be conservative. Furthermore, if this specification becomes complex, for instance, $\mathcal{S}=\mathcal{S}_1\cup\mathcal{S}_2\cap\mathcal{S}_3\cup\mathcal{S}_4$, the inclusion of the additional CBF constraints would furthermore result into a more conservative scenario if $\mathcal{S}=\mathcal{S}_1\cup\mathcal{S}_2\cap\mathcal{S}_3\cup\mathcal{S}_4\neq\varnothing$, as the QP tries to guarantee forward invariance for this $\mathcal{S}$. Additionally, if $\mathcal{S}=\mathcal{S}_1\cup\mathcal{S}_2\cap\mathcal{S}_3\cup\mathcal{S}_4=\varnothing$, then the QP in \eqref{eqn:clf_cbf_qp_complex_state_constraints} cannot be used to synthesize safe and stabilizing controllers.

\textit{Case 4}: Finally, we consider the scenario involving nonlinear system \eqref{eqn:nonlinear_system_dynamics}. In this case, the QP formulation \eqref{eqn:clf_cbf_qp} may not be directly applicable, as the CBF condition would introduce nonlinear terms with respect to $\bu$, potentially rendering the QP ineffective.

To summarize, if we have at least one or all of the cases discussed above, using CBF-CLF-QP based methods would not be feasible (Cases 1-3) or not possible (Case 3 when the system \eqref{eqn:nonlinear_system_dynamics} is general).

\section{Main results\label{sec:main_results}}
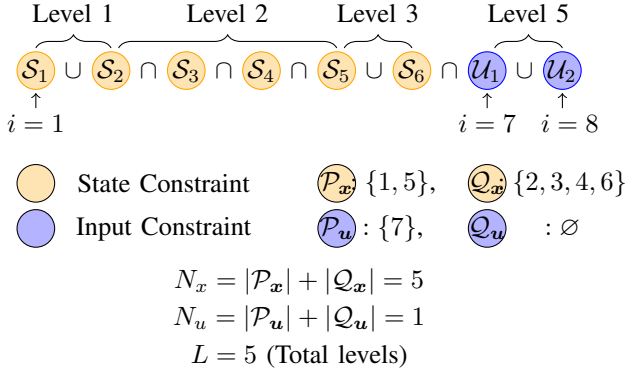
\begin{figure}
    \centering
    \begin{tikzpicture}
\begin{scope}[shift={(1,1)}]
    \definecolor{statecolor}{RGB}{255, 165, 0} 
    \definecolor{inputcolor}{RGB}{0, 0, 255} 

    \filldraw[statecolor, fill opacity=0.3] (1,0) circle (0.25cm);
    \filldraw[statecolor, fill opacity=0.3] (2,0) circle (0.25cm);
    \filldraw[statecolor, fill opacity=0.3] (3,0) circle (0.25cm);
    \filldraw[statecolor, fill opacity=0.3] (4,0) circle (0.25cm);
        \filldraw[statecolor, fill opacity=0.3] (5,0) circle (0.25cm);
            \filldraw[statecolor, fill opacity=0.3] (6,0) circle (0.25cm);

    \filldraw[inputcolor, fill opacity=0.3] (7,0) circle (0.25cm);
    \filldraw[inputcolor, fill opacity=0.3] (8,0) circle (0.25cm);

    \node at (1,0) {$\mathcal{S}_1$};
    \node at (2,0) {$\mathcal{S}_2$};
    \node at (3,0) {$\mathcal{S}_3$};
    \node at (4,0) {$\mathcal{S}_4$};
    \node at (5,0) {$\mathcal{S}_5$};
    \node at (6,0) {$\mathcal{S}_6$};
        \node at (7,0) {$\mathcal{U}_1$};
    \node at (8,0) {$\mathcal{U}_2$};

    \node at (1.5,0) {$\cup$};
    \node at (2.5,0) {$\cap$};
    \node at (3.5,0) {$\cap$};
      \node at (4.5,0) {$\cap$};
        \node at (5.5,0) {$\cup$};
    \node at (6.5,0) {$\cap$};
    \node at (7.5,0) {$\cup$};
         \draw [decorate, decoration={brace, amplitude=5pt}] (1,0.3) -- (2,0.3) 
                node [midway, above=6pt] {Level 1};

                    \draw [decorate, decoration={brace, amplitude=5pt}] (2.1,0.3) -- (5,0.3) 
                node [midway, above=6pt] {Level 2};

                  \draw [decorate, decoration={brace, amplitude=5pt}] (5.1,0.3) -- (6,0.3) 
                node [midway, above=6pt] {Level 3};
                 \draw [decorate, decoration={brace, amplitude=5pt}] (7.1,0.3) -- (8,0.3) 
                node [midway, above=6pt] {Level 5};
    \node[circle, fill=statecolor, fill opacity=0.3, draw=black, minimum width=0.5cm, minimum height=0.5cm] at (1.0, -1.5) {};
    \node at (2.7, -1.5) {State Constraint};

    \node at (4.5, -2.8) {$N_x=|\mathcal{P}_{\bx}|+|\mathcal{Q}_{\bx}|=5$};
        \node at (4.5, -3.3) {$N_u=|\mathcal{P}_{\bu}|+|\mathcal{Q}_{\bu}|=1$};
             \node at (4.5, -3.8) {$L=5$ (Total levels)};

        \node[circle, fill=statecolor, fill opacity=0.3, draw=black, minimum width=0.5cm, minimum height=0.5cm] at (5.0, -1.5) {};
    \node at (5.75, -1.5) {$:\{1,5\}$,};
        \node at (5,-1.5) {$\mathcal{P}_{\bx}$};

                \node[circle, fill=statecolor, fill opacity=0.3, draw=black, minimum width=0.5cm, minimum height=0.5cm] at (7, -1.5) {};
    \node at (8, -1.5) {$:\{2,3,4,6\}$};
        \node at (7,-1.5) {$\mathcal{Q}_{\bx}$};

                \node[circle, fill=inputcolor, fill opacity=0.3, draw=black, minimum width=0.5cm, minimum height=0.5cm] at (5.0, -2.1) {};
    \node at (5.75, -2.1) {$:\{7\}$,};
        \node at (5,-2.1) {$\mathcal{P}_{\bu}$};

                \node[circle, fill=inputcolor, fill opacity=0.3, draw=black, minimum width=0.5cm, minimum height=0.5cm] at (7, -2.1) {};
    \node at (8, -2.1) {$:\varnothing$};
        \node at (7,-2.1) {$\mathcal{Q}_{\bu}$};

    \node[circle, fill=inputcolor, fill opacity=0.3, draw=black, minimum width=0.5cm, minimum height=0.5cm] at (1.0, -2.1) {};
    \node at (2.7, -2.1) {Input Constraint};
    \draw[->] (1,-0.5) -- (1,-0.3);
    \node at (1,-0.7) {$i=1$};

       \draw[->] (7,-0.5) -- (7,-0.3);
    \node at (7,-0.7) {$i=7$};
       \draw[->] (8,-0.5) -- (8,-0.3);
    \node at (8.1,-0.7) {$i=8$};
\end{scope}
\end{tikzpicture}
    \caption{\small 
    Illustration of state and input constraint specifications. The set $\mathcal{P}_{\bx} = \{1\}$ represents a union operation applied after the state constraint set $\mathcal{S}_1$, while $\mathcal{Q}_{\bx} = \{2, 3, 4\}$ represents intersection operations applied after the state constraint sets $\mathcal{S}_2$, $\mathcal{S}_3$, and $\mathcal{S}_4$. Similarly, the set $\mathcal{P}_{\bu} = \{5\}$ represents a union operation applied after the input constraint set $\mathcal{U}_1$, and $\mathcal{Q}_\bu = \varnothing$ indicates that there are no further input constraints following $\mathcal{U}_2$.  Note that the subscript $\bx$ in $\mathcal{P}_{\bx}$ or $\mathcal{Q}_{\bx}$ indicates that there is a union or intersection operation after the state constraint set. Similarly, for $\mathcal{P}_{\bu}$ and $\mathcal{Q}_{\bu}$.
    }
    \label{fig:illus_state_input_spec}
\end{figure}










In Section \ref{subsec:state_and_input_constraints}, we first introduce the nomenclature to describe the complex state and input constraint specification. Next, in Section \ref{subsec:ubf}, we propose the notion of Universal Barrier Functions (UBF) followed by the formulation of UBF based quadratic programs (UBF-QP) for safe and stabilizing control synthesis for input constrained and high order general nonlinear systems subject to input constraints in Section \ref{subsec:ubf_based_quadratic_programs}. 
\subsection{State and Input constraint specifications\label{subsec:state_and_input_constraints}}
Given the individual compact state $\mathcal{S}_i$ and input $\mathcal{U}_i$ constraint sets characterized by $\mathcal{S}_i=\{\bx\in\mathbb{R}^n|\;h_i(\bx)\geq 0\}$ and $\mathcal{U}_i=\{\bu\in\mathbb{R}^m|\;h_i(\bu)\geq 0\}$ respectively, the overall specification denoted by $\mathcal{A}$ is given by \cite{zeng2021safety_feasibility_input_constraints}
\begin{align}
   \mathcal{A}=\mathcal{S}\cap\mathcal{U}
\label{eqn:state_and_input_constraint_set_complex}
\end{align}
where the sets $\mathcal{S}$ and $\mathcal{U}$ are given by
\begin{align}
    &\mathcal{S}=\oplus_{i=1}^{N_x} \{\bx\in\mathbb{R}^n|\;h_i(\bx)\geq 0\},\nonumber\\ &\mathcal{U}=\oplus_{i=1}^{N_u} \{\bu\in\mathbb{R}^m|\;h_i(\bu)\geq 0\}\nonumber
\end{align}
where $\oplus$ is defined as in Section \ref{sec:nomenclature} for some given sets $\pn$ and $\qn$ (based on the state and input constraints specification) and $N=N_x+N_u$. Let $L$ denote the number of levels for the given specification. For instance, $\cup\cap\cap\cap\cup\cup$ and $\cap\cup$ is a three level ($L=3$) and two level ($L=2$) specifications respectively. 
Let $\mathcal{P}'_N\in P(\mathcal{P}_\bx\cup\mathcal{P}_\bu)$ and $\mathcal{Q}'_N\in P(\mathcal{Q}_\bx\cup\mathcal{Q}_\bu)$, where each subset consists of indices that are consecutive numbers. These subsets are the indices of the union operations at a given level $j\in\mathcal{L}_\mathcal{P}$ for $\mathcal{P}'_N$. Similarly, for $\mathcal{Q}'_N$. For instance, for $\cup\cap\cap\cap\cup\cup$, the sets $\mathcal{P}'_N$ and $\mathcal{Q}'_N$ are $\{\{1\},\{5,6\}\}$ and $\{\{2,3,4\}\}$ respectively.   Let $N'_\mathcal{P}$ and $N'_\mathcal{Q}$ denote the number of such subsets in $\mathcal{P}'_N$ and $\mathcal{Q}'_N$ respectively (note that $N'_\mathcal{P}+N'_\mathcal{Q}=L$). In the previous example, $N'_\mathcal{P}=2$ and $N'_\mathcal{Q}=1$. Furthermore, let 
$\mathcal{P}'_N(j)$ denote the subset present at $j^\text{th}$ level where $j\in\mathcal{L}_{\mathcal{P}}$. Similarly, $\mathcal{Q}'_N(j)$ denote the subset present at $j^\text{th}$ level where $j\in\mathcal{L}_{\mathcal{Q}}$.
\begin{assumption}
    \normalfont We assume that the sets  $\mathcal{S}_i$ (for $i\in\mathcal{I}_\bx$) and $\mathcal{U}_i$ (for $i\in\mathcal{I}_\bu$) are compact and $\mathcal{A}$ is non-empty. This is a reasonable assumption to make from a practical perspective.
    \label{assumption:assum_2}
\end{assumption}
{
\begin{assumption}
    \normalfont We assume that $(\mathbb{R}^n\times\mathbb{R}^m)\setminus\mathcal{A}$ is unbounded. The motivation for this assumption is that, under only safe state constraints $\mathcal{S}$, if $\mathbb{R}^n\setminus\mathcal{S}$ (where $\mathcal{S}$ is a closed safe set) is bounded, from Theorem 11 in \cite{braun2017existence}, there does not exist a locally Lipschitz safe stabilizing controller.
        \label{assumption:assum_3}
\end{assumption}
}
\begin{example}
    \normalfont 
    Consider the state and input constraint specification given by
\begin{align}
\mathcal{A}=\mathcal{S}_1\cup\mathcal{S}_2\cap\mathcal{S}_3\cap\mathcal{S}_4\cap\mathcal{S}_5\cup\mathcal{S}_6\cap\mathcal{U}_1\cup\mathcal{U}_2
\label{eqn:ubf_specification_example_1}
\end{align}
In this specification, there are five levels i.e, $L=5$ (in particular, $\mathcal{S}_1\cup\mathcal{S}_2$, $\mathcal{S}_2\cap\mathcal{S}_3\cap\mathcal{S}_4\cap\mathcal{S}_5$, 
 $\mathcal{S}_5\cup\mathcal{S}_6$, $\mathcal{S}_6\cap\mathcal{U}_1$, and $\mathcal{U}_1\cup\mathcal{U}_2$). The sets $\pn=\{1,5\}$ and $\qn=\{2,3,4,6\}$. 
    The set $\mathcal{P}'_N$ consists of sequence of subsets where each subset consists of indices at a particular level that have union operations i.e. $\mathcal{P}'_N=\{\{1\},\{5\},\{7\}\}\}$. Similarly, $\mathcal{Q}'_N=\{\{\{2,3,4\},\{6\}\}$. Furthermore, $\mathcal{P}'_N(2)=\{5\}$ as, at level 3, we have union operations at index 5. Similarly, $\mathcal{Q}'_N(1)=\{2,3,4\}$ and $\mathcal{Q}'_N(2)=\{6\}$.
\label{example:state_and_input_constraint_specification}
\end{example}
\subsection{Union and Intersection operations}
Note that the union of sets $\mathcal{B}=\underset{i\in\mathcal{I}_N}{\cup}\;\mathcal{B}_i$ can be equivalently written as
\begin{align}
    \left\{\bx|\;\exists\;i\in \iin\;\;h_i(\bx)\geq 0\right\}\logeq\left\{\bx|\;\underset{i\in \iin}{\max}\;h_i(\bx)\geq 0\right\}
    \label{eqn:union_equivalent_condition}
\end{align}
Similarly, the intersection of sets $\underset{i\in\mathcal{I}_N}{\cap}\;\mathcal{B}_i=\{\bx|\;h_i(\bx)\geq 0,\;i\in \iin\}$ can be equivalently written as
\begin{align}
    \left\{\bx|\;h_i(\bx)\geq 0,\;\forall\;\;i\in \iin\right\}\logeq\left\{\bx|\;\underset{i\in \iin}{\min}\;h_i(\bx)\geq 0\right\}
\label{eqn:intersection_equivalent_condition}
\end{align}
With a slight abuse of notation, the inputs to $h_i$ are either $\bx$ or $\bu$. For $i\in\iin$, the appropriate inputs to $h_i$ can be inferred from the context.
\subsection{Lyapunov-based stability condition as a forward invariance condition}
Before we proceed with presenting our notion of UBF, we present a methodology to view the Lyapunov-based stability condition \eqref{eqn:lyap_cond} as a forward invariance condition on a set. Towards this goal, let $V$ be a CLF (Definition \ref{def:clf}) and consider the following set
\begin{align}
    \mathcal{K}_V=\{(\bx,\bu)\in\mathbb{R}^n\times\mathbb{R}^m:\;\dot{V}(\bx)\leq -P(\bx)\}
    \label{eqn:kv}
\end{align}
Define $h_V(\bx,\bu)=-\dot{V}(\bx)-P(\bx)$ and let $\mathcal{S}_V\subset\mathcal{K}_V:=\{(\bx,\bu)\in\mathbb{R}^n\times\mathbb{R}^m|\;h_V(\bx,\bu)\geq 0\}$ be a compact set. Note that the set $\mathcal{K}_V\subset\mathcal{D}$ where $\mathcal{D}$ is an open set defined in Definition \ref{def:clf}. Then, if there exists a $\mathcal{K}_\infty$ function $\alpha$ such that the following holds true
\begin{align}
    \dot{h}_V(\bx,\bu)\geq -\alpha(h_V(\bx,\bu))\quad\forall (\bx,\bu)\in\mathcal{S}_V
    \label{eqn:lyap_as_forward_invariance}
\end{align}
Then, for any $(\bx_0,\bu_0)\in\mathcal{S}_V$, it follows from Nagumo theorem for forward invariance, $(\bx(t),\bu(t))\in\mathcal{S}_V$ ($t\geq 0)$ which implies that $\bu(t)$ that satisfies \eqref{eqn:lyap_as_forward_invariance} is a stabilizing controller. Consequently, the specification characterized by the set $\mathcal{A}$ defined in \eqref{eqn:state_and_input_constraint_set_complex} is modified to include the stability specification as follows:
\begin{align}
\mathcal{A}_s=\mathcal{A}\cap\mathcal{S}_V
    \label{eqn:specification_with_stability}
\end{align}
It must be noted that in the expression of $\dot{h}_V(\bx,\bu)$, an integral control term is present. The reason for presenting this view is mainly to encode the stability and safety conditions into a single scalar-valued continuously differentiable function. This will become clearer in subsequent sections.
\begin{figure}
\begin{tikzpicture}
    \input{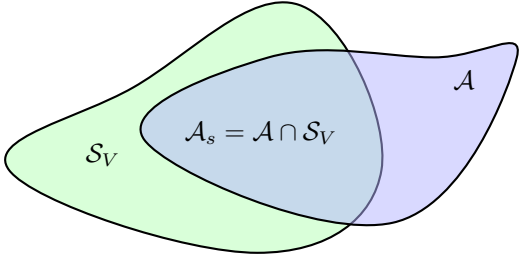}
\end{tikzpicture}
\caption{The depiction of sets $\mathcal{A}_s$ \eqref{eqn:specification_with_stability}, $\mathcal{A}$\eqref{eqn:state_and_input_constraint_set_complex}, and $\mathcal{S}_V$ characterized by the CLF $V$ condition \eqref{eqn:clf_condition}. }
\label{fig:sets}
\end{figure}
\subsection{Universal Barrier Functions (UBFs)\label{subsec:ubf}}
In this section, 
we present a scalar continuously differentiable function termed UBF that can simultaneously encode the notions of stability and safety, handle complex state and input constraints specifications, and be applicable to general nonlinear systems \eqref{eqn:nonlinear_system_dynamics} and systems with higher relative degrees.
\begin{definition}
    \normalfont \textbf{(UBF)} Given the system \eqref{eqn:nonlinear_system_dynamics}, the set $\mathcal{A}_s$ in \eqref{eqn:state_and_input_constraint_set_complex} characterized by continuously differentiable functions $h_i(\bx)$ for state constraints and $h_i(\bu)$ for input constraints and a CLF $V$, consider a scalar valued continuously differentiable function $h:\mathbb{R}^n\times\mathbb{R}^m\rightarrow\mathbb{R}$ defined as follows:
    \begin{align}
&h(\bx,\bu)=c_2\left(H_{N+1}(\bx,\bu)  \right)+c_2(b_N)
\label{eqn:ubf}
    \end{align}
    where 
     $\beta>0$, $b_N=\left({\prod_{i=1}^{|\mathcal{P}'_N|}\;|\mathcal{P}'_N(i)|}\right)^{-1}$, and $H_{N+1}(\bx,\bu)$ is recursively defined by
    \begin{subequations}
    \begin{align}
    & H_0(\bx,\bu)=\left\{\begin{array}{ll}
     &c_1(-\beta (\dot{V}(\bx)-P(\bx))), \quad \text{if } i=N+1  \\
     & c_1(\beta h_i(\bx)), \quad \quad\quad\quad\quad\quad\text{if } i\in \ix\\
     &c_1(\beta h_i(\bu)), \quad\quad\quad\quad\quad\quad \text{if } i\in \iu
\end{array}\right.\\
&H_{j+1}(\bx,\bu)\nonumber\\
&=\left\{\begin{array}{ll} 
H_j(\bx,\bu)+c_1(\dot{V}(\bx)), \quad \quad\quad\quad\quad\quad\quad\quad j=N+1\\
H_j(\bx,\bu)+c_1(h_{j+1}(\bx)), \quad \quad\quad\quad\quad\quad\quad j\in \pn\\
\left((H_j(\bx,\bu))^{-1}+(c_1(h_j(\bx)))^{-1}\right)^{-1}, \quad\quad j\in \qn\\
H_j(\bx,\bu)+c_1(h_{j+1}(\bu)), \quad \quad\quad\quad\quad\quad\quad j\in \pu\\
\left((H_j(\bx,\bu))^{-1}+(c_1(h_{j+1}(\bu)))^{-1}\right)^{-1}, \quad j\in \qu\\
\end{array}\right.
\end{align}
\label{eqn:ubf_math_equation}
    \end{subequations}
where $c_1(r)=\mathrm{e}^{\beta r}$, $c_2(r)=\text{ln($r$)}/\beta$ for $r\in\mathbb{R}$. Then, $h$ is said to be a Universal Barrier Function (UBF), if there exists a $\bu\in\mathbb{R}^m$ and class $\mathcal{K}_\infty$ function $\alpha$ such that the following condition holds true
        \begin{align}
&\dot{h}(\bx,\bu)+\alpha(h(\bx,\bu))\geq 0
    \label{eqn:ubf_condition}
    \end{align}
   for every $\bx\in\mathbb{R}^n$.
   \label{defn:ubf}
\end{definition}

\begin{table}[h]
    \centering
    \begin{tabular}{|c|c|}
    \hline
         Functions ($\beta>0$)  & $S_\beta(x_1,x_2)$  \\ \hline
         Boltzmann operator & $\frac{x_1\mathrm{e}^{\beta x_1}+x_2\mathrm{e}^{\beta x_2}}{\mathrm{e}^{\beta x_1}+\mathrm{e}^{\beta x_2}}$\\ \hline
    Log-Sum-Exp & $\frac{1}{\beta}\text{log}\left(\mathrm{e}^{\beta x_1}+\mathrm{e}^{\beta x_2}\right)$ \\ \hline
    Mellowmax & $\frac{1}{\beta}\text{log}\left(\frac{\mathrm{e}^{\beta x_1}+\mathrm{e}^{\beta x_2}}{2}\right)$  \\ \hline
    $\beta$-Norm & $\left(x_1^\beta+x_2^\beta\right)^{\frac{1}{\beta}}$  \\ \hline
    \end{tabular}
    \caption{Candidate functions ${S}_\beta(x_1,x_2)$}
    \label{tab:functions}
\end{table}

Let $h(\bx,\bu)$ be a UBF, then we define $\mathcal{K}_{\text{UBF}}(\bx)$ as follows:
\begin{align}
    \mathcal{K}_{\text{UBF}}(\bx)=\{\bu\in\mathbb{R}^m|\; \dot{h}(\bx,\bu)+\alpha(h(\bx,\bu))\geq 0\}
    \label{eqn:k_ubf}
\end{align}
{
\begin{lemma}
\normalfont
For a parameter $\beta>0$, let $ S_\beta(x_1, x_2) $ and $ S_{-\beta}(x_1, x_2) $  be classes of functions from $\mathbb{R}^2$ to $\mathbb{R}$ that are smooth approximations of the $ \max\{x_1, x_2\} $ and $ \min\{x_1, x_2\} $ functions respectively, such that
\begin{align}
   & \left\{ (x_1, x_2) \,\bigg|\, \lim_{\beta \to \infty} S_\beta(x_1, x_2) \geq 0 \right\} \nonumber\\
   &= \left\{ (x_1, x_2) \,\bigg|\, \max\{x_1, x_2\} \geq 0 \right\}.\nonumber\\
   &\left\{ (x_1, x_2) \,\bigg|\, \lim_{\beta \to \infty} S_{-\beta}(x_1, x_2) \geq 0 \right\} \nonumber\\
&= \left\{ (x_1, x_2) \,\bigg|\, \min\{x_1, x_2\} \geq 0 \right\}.\nonumber
\end{align}
where $x_1,\;x_2\in\mathbb{R}$. Then, the set
   \begin{align}
      & \left\{ (x_1, x_2, x_3) \,\bigg|\, \lim_{\beta \to \infty} S_\beta\big(S_\beta(x_1, x_2), x_3\big) \geq 0 \right\}\nonumber\\
       &=\left\{ (x_1, x_2, x_3) \,\bigg|\, \max\{x_1, x_2, x_3\} \geq 0 \right\}.\nonumber
      \end{align}
where $x_3\in\mathbb{R}$. Similarly, the set
   \begin{align}
      & \left\{ (x_1, x_2, x_3) \,\bigg|\, \lim_{\beta \to \infty} S_{-\beta}\big(S_{-\beta}(x_1, x_2), x_3\big) \geq 0 \right\}\nonumber\\
        &=  \left\{ (x_1, x_2, x_3) \,\bigg|\, \min\{x_1, x_2, x_3\} \geq 0 \right\}.\nonumber
      \end{align}
\end{lemma}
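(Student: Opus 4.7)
The plan is to argue the set equality by double inclusion, exploiting the associativity of $\max$ (respectively $\min$) together with the 2-variable hypothesis. Write $y_\beta := S_\beta(x_1, x_2)$ so that the triple-argument quantity $S_\beta(S_\beta(x_1,x_2),x_3)$ becomes $S_\beta(y_\beta, x_3)$. By the hypothesis, $\lim_{\beta\to\infty} y_\beta \geq 0$ if and only if $\max\{x_1,x_2\} \geq 0$. The target conclusion is that $\lim_{\beta\to\infty} S_\beta(y_\beta, x_3) \geq 0$ if and only if $\max\{\max\{x_1,x_2\},x_3\} \geq 0$, which by associativity equals $\max\{x_1,x_2,x_3\} \geq 0$.

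For the forward inclusion, assume $\max\{x_1,x_2,x_3\} \geq 0$. Split into cases: if $x_3 \geq 0$ then, treating $(y_\beta, x_3)$ as a pair in which the second component is already nonnegative, the 2-variable hypothesis (applied with the limiting pair $(\max\{x_1,x_2\}, x_3)$) forces the nested limit to be nonnegative. If instead $\max\{x_1,x_2\} \geq 0$, then $y_\beta$ tends to a nonnegative limit, and a second application of the 2-variable hypothesis to $(y_\beta, x_3)$ yields the same conclusion. For the reverse inclusion I would argue contrapositively: if $\max\{x_1,x_2,x_3\} < 0$, then $\max\{x_1,x_2\} < 0$ and $x_3 < 0$, so by the inner hypothesis $y_\beta$ tends to a strictly negative value; applying the outer hypothesis to the pair $(\lim y_\beta, x_3)$, both arguments are strictly negative and the outer limit must be strictly negative, contradicting nonnegativity. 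The $\min$/$S_{-\beta}$ case is handled identically by reversing signs, or equivalently by observing $S_{-\beta}(x,y) = -S_\beta(-x,-y)$ in the symmetric candidate families of Table~I.

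The main obstacle is a subtle one of order of limits: the hypothesis only constrains $\lim_{\beta\to\infty} S_\beta(x_1,x_2)$ for \emph{fixed} arguments, whereas in the composition the first argument $y_\beta$ itself depends on $\beta$. To close this gap rigorously one needs slightly more than what is literally written, namely pointwise convergence $\lim_{\beta\to\infty} S_\beta(x_1,x_2) = \max\{x_1,x_2\}$ together with continuity of $S_\beta$ (both of which hold for all four candidates in Table~\ref{tab:functions}), so that a diagonal/continuity argument licenses the interchange $\lim_\beta S_\beta(y_\beta,x_3) = \max\{\lim_\beta y_\beta, x_3\}$. Once this interchange is justified, the proof collapses to associativity of $\max$, and the same inductive step extends to arbitrary nested depth, which is exactly what is needed to peel through the $L$ levels of the UBF recursion \eqref{eqn:ubf_math_equation} in later results.
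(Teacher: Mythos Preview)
Your proposal reaches the right conclusion but takes a longer road than the paper. The paper does not argue by double inclusion or case analysis at all: it simply reads ``smooth approximation'' as pointwise convergence $\lim_{\beta\to\infty}S_\beta(x_1,x_2)=\max\{x_1,x_2\}$, then writes in one line
\[
\lim_{\beta\to\infty} S_\beta\big(S_\beta(x_1,x_2),x_3\big)=\max\big\{\max\{x_1,x_2\},x_3\big\}=\max\{x_1,x_2,x_3\},
\]
and concludes that the two superlevel sets coincide because the limiting functions are equal. Once you have equality of the limiting function with $\max$, the set equality is immediate and your forward/reverse case split is unnecessary work.

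You are, however, right to flag the order-of-limits issue, and it is worth noting that the paper's own proof glosses over exactly the same point: passing from $\lim_\beta S_\beta(y_\beta,x_3)$ to $\max\{\lim_\beta y_\beta,x_3\}$ requires more than pointwise convergence of $S_\beta$ at fixed arguments. Your proposed fix (pointwise convergence plus continuity of each $S_\beta$) is still not quite sufficient for the interchange; what actually makes it work for the Table~\ref{tab:functions} candidates is local \emph{uniform} convergence of $S_\beta$ to $\max$ (or, alternatively, monotonicity of $S_\beta$ in each argument together with the sandwich bounds used in Lemmas~\ref{lemma:union}--\ref{lemma:intersection}). With that in hand the paper's one-line computation is fully justified, and your inductive remark about extending to arbitrary nesting depth is exactly how the argument scales.
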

\begin{proof}
   Since $ S_\beta(x_1, x_2) $ smoothly approximates $ \max\{x_1, x_2\} $, we have $\underset{\beta\to\infty}{\lim} S_\beta(x_1, x_2) = \max\{x_1, x_2\}$
   for all $ x_1, x_2\in\mathbb{R} $. We denote by $S_\infty:=\underset{\beta\to\infty}{\lim} S_\beta(x_1, x_2) $.
   Now, consider the nested function $ S_\beta\big(S_\beta(x_1, x_2), x_3\big) $. Taking the limit as $ \beta \to \infty $, we get
   \begin{align*}
   \lim_{\beta \to \infty} S_\beta\big(S_\beta(x_1, x_2), x_3\big) &= S_\infty\big(S_\infty(x_1, x_2), x_3\big) \\
   &= \max\big\{\max\{x_1, x_2\}, x_3\big\} \\
   &= \max\{x_1, x_2, x_3\}.
   \end{align*}
Therefore, the set where this limit is non-negative is
   \begin{align}
   &\left\{ (x_1, x_2, x_3) \,\bigg|\, \lim_{\beta \to \infty} S_\beta\big(S_\beta(x_1, x_2), x_3\big) \geq 0 \right\}\nonumber\\
   &= \left\{ (x_1, x_2, x_3) \,\bigg|\, \max\{x_1, x_2, x_3\} \geq 0 \right\}.
   \end{align}
   Similarly, since $ S_{-\beta}(x_1, x_2) $ smoothly approximates $ \min\{x_1, x_2\} $, we have $   \lim_{\beta \to \infty} S_{-\beta}(x_1, x_2) = \min\{x_1, x_2\}$
   for all $ x_1, x_2 $.
   Now, consider the nested function $ S_{-\beta}\big(S_{-\beta}(x_1, x_2), x_3\big) $. Taking the limit as $ \beta \to \infty $, we get
   \begin{align*}
   \lim_{\beta \to \infty} S_{-\beta}\big(S_{-\beta}(x_1, x_2), x_3\big) &= S_{-\infty}\big(S_{-\infty}(x_1, x_2), x_3\big) \\
   &= \min\big\{\min\{x_1, x_2\}, x_3\big\} \\
   &= \min\{x_1, x_2, x_3\}.
   \end{align*}
   Therefore, the set where this limit is non-negative is
   \begin{align}
  & \left\{ (x_1, x_2, x_3) \,\bigg|\, \lim_{\beta \to \infty} S_{-\beta}\big(S_{-\beta}(x_1, x_2), x_3\big) \geq 0 \right\}\nonumber\\
  &= \left\{ (x_1, x_2, x_3) \,\bigg|\, \min\{x_1, x_2, x_3\} \geq 0 \right\}.
   \end{align}
   \end{proof}
\begin{remark}
   \normalfont  By sequentially applying the smooth approximations and taking the limit as $ \beta \to \infty $, we recover the $ \max $ and $ \min $ functions over multiple variables. Thus, the sets defined by the limits of these nested functions correspond precisely to the sets where the maximum or minimum of the variables is non-negative. 
\end{remark}
\begin{remark}
    \normalfont Note that other smooth functions used to approximate the $\max$ and $\min$ operators for union and intersection sets respectively such as Boltzmann functions, Mellowmax functions, softmax functions, $\beta$-Norm functions, Smooth Maximum Unit (SMU) etc. (see Table \ref{tab:functions}), can be used in place of log-sum-exp function in Definition \ref{defn:ubf}. However, we restrict ourselves to log-sum-exp (LSE) expressions in this paper.
\end{remark}
}
\begin{figure}
\begin{tikzpicture}
    \input{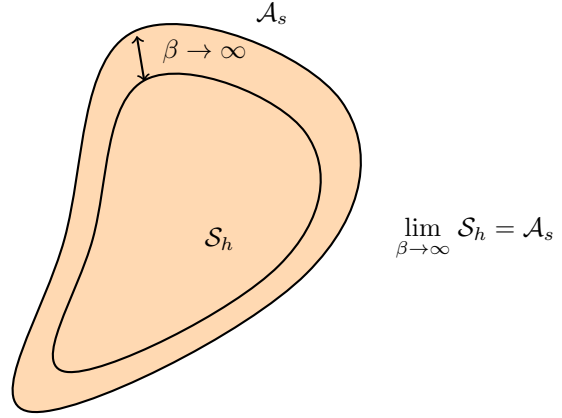}
\end{tikzpicture}
\caption{The set $\mathcal{S}_h$ characterized by the UBF $h$ provides an inner approximation for the safe set $\mathcal{A}_s$.}
\label{fig:sets}
\end{figure}

As we show in the subsequent theorem, the set characterized by the UBF $h(\bx,\bu)$ i.e, $\mathcal{S}_h=\{(\bx,\bu)\in\mathbb{R}^n\times\mathbb{R}^m|\;h(\bx,\bu)\geq 0\}$ over approximates the set $\mathcal{A}_s$.
{

\begin{lemma}
    \label{lemma:union}
    \normalfont
    (Union of constraint sets) Consider the specification given by the union of sets i.e., $ \mathcal{A}_s = \bigcup_{i=1}^N \mathcal{S}_i$
    where each $\mathcal{S}_i = \{ (\bx, \bu) \in\mathbb{R}^n\times\mathbb{R}^m\mid h_i(\bx, \bu) \geq 0 \}$. Then, the set $\mathcal{A}_s$ is a superset of $\mathcal{S}_h = \{ (\bx, \bu) \in\mathbb{R}^n\times\mathbb{R}^m\mid h(\bx, \bu) \geq 0 \}$, where
    \begin{align}
    h(\bx, \bu) = \frac{1}{\beta} \ln\left( \sum_{i=1}^N \mathrm{e}^{\beta h_i(\bx, \bu)} \right) - \frac{\ln N}{\beta}.
    \end{align}
    Furthermore, as $\beta \rightarrow \infty$, $\mathcal{S}_h$ converges to $\mathcal{A}_s$, i.e., $  \underset{\beta \rightarrow \infty}{\lim}\; \mathcal{S}_h = \mathcal{A}_s$.
\end{lemma}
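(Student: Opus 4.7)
The plan is to deduce both assertions of the lemma from the standard two-sided log-sum-exp inequality
\[
\mathrm{e}^{\beta m(\bx,\bu)} \;\leq\; \sum_{i=1}^{N}\mathrm{e}^{\beta h_i(\bx,\bu)} \;\leq\; N\,\mathrm{e}^{\beta m(\bx,\bu)},
\]
where $m(\bx,\bu):=\max_{1\leq i\leq N}h_i(\bx,\bu)$. Applying $\tfrac{1}{\beta}\ln(\cdot)$ to all three sides and subtracting the correction $\tfrac{\ln N}{\beta}$ yields the clean sandwich
\[
m(\bx,\bu) - \tfrac{\ln N}{\beta} \;\leq\; h(\bx,\bu) \;\leq\; m(\bx,\bu),
\]
from which both claims follow almost immediately.

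The inclusion $\mathcal{S}_h\subseteq\mathcal{A}_s$ follows from the upper bound alone: if $(\bx,\bu)\in\mathcal{S}_h$ then $0\leq h(\bx,\bu)\leq m(\bx,\bu)$, so by the equivalent characterization \eqref{eqn:union_equivalent_condition} of a union of safe sets there exists an index $i_\star$ with $h_{i_\star}(\bx,\bu)\geq 0$, which places $(\bx,\bu)$ in $\mathcal{S}_{i_\star}\subseteq\mathcal{A}_s$.

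For the convergence $\lim_{\beta\to\infty}\mathcal{S}_h=\mathcal{A}_s$, the upper bound already furnishes $\mathcal{S}_h\subseteq\mathcal{A}_s$ uniformly in $\beta$. The lower bound supplies the reverse direction: for any $(\bx,\bu)$ with $m(\bx,\bu)>0$, choosing $\beta>\ln N/m(\bx,\bu)$ gives $h(\bx,\bu)>0$, so $(\bx,\bu)\in\mathcal{S}_h$ for every sufficiently large $\beta$. The sandwich additionally implies $h\to m$ uniformly in $(\bx,\bu)$ since the gap is bounded by $\tfrac{\ln N}{\beta}$ independently of the point, so $\mathcal{S}_h$ approaches $\mathcal{A}_s$ in a Painlev\'e--Kuratowski sense as $\beta\to\infty$.

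The one subtle point I would flag is the behaviour on the boundary $\{m(\bx,\bu)=0\}$: the upper bound forces $h\leq 0$ there, and in fact $h<0$ whenever more than one $h_i$ is finite, so boundary points of $\mathcal{A}_s$ never lie in $\mathcal{S}_h$ for finite $\beta$. They are recovered only in a closure / limit sense, and the term $-\tfrac{\ln N}{\beta}$ in the definition of $h$ is precisely what makes the gap vanish at rate $O(1/\beta)$. Handling this boundary issue cleanly is the only place where the argument requires a little care; everything else is a direct manipulation of the LSE sandwich and the equivalence \eqref{eqn:union_equivalent_condition}.
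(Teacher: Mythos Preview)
Your proposal is correct and follows essentially the same approach as the paper: both proofs rest on the standard log-sum-exp sandwich inequality, use the upper bound $h\leq m$ to obtain the inclusion $\mathcal{S}_h\subseteq\mathcal{A}_s$, and invoke the vanishing of $\tfrac{\ln N}{\beta}$ for the convergence claim. Your treatment of the convergence is in fact more careful than the paper's (which simply asserts $\lim_{\beta\to\infty}\mathcal{S}_h=\mathcal{A}_s$ from the pointwise limit $h\to m$), and your remark about boundary points with $m=0$ being excluded from $\mathcal{S}_h$ for every finite $\beta$ is a valid observation that the paper does not address.
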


\begin{proof}
    Consider the following inequality:
    \begin{align}
    \mathrm{e}^{\beta \max_{i} h_i(\bx, \bu)} \leq \sum_{i=1}^N \mathrm{e}^{\beta h_i(\bx, \bu)} \leq N \mathrm{e}^{\beta \max_{i} h_i(\bx, \bu)}.
    \end{align}
    Taking the natural logarithm and dividing by $\beta$:
    \begin{align}
    \max_{i} h_i(\bx, \bu) & \leq \frac{1}{\beta} \ln\left( \sum_{i=1}^N \mathrm{e}^{\beta h_i(\bx, \bu)} \right)\nonumber\\
    &\leq \max_{i} h_i(\bx, \bu) + \frac{\ln N}{\beta}.
    \label{eqn:max_less}
    \end{align}
    Subtracting $\frac{\ln N}{\beta}$ from the middle term of \eqref{eqn:max_less} gives,
    \begin{align}
    h(\bx, \bu) = \frac{1}{\beta} \ln\left( \sum_{i=1}^N \mathrm{e}^{\beta h_i(\bx, \bu)} \right) - \frac{\ln N}{\beta} \leq \max_{i} h_i(\bx, \bu).
    \end{align}
    Therefore, if $h(\bx, \bu) \geq 0$, i.e., $(\bx,\bu)\in\mathcal{S}_h$, then $\max_{i} h_i(\bx, \bu) \geq 0$, implying that $(\bx, \bu) \in \mathcal{A}_s$. Hence, $\mathcal{S}_h \subseteq \mathcal{A}_s$.
    
    Furthermore, as $\beta \rightarrow \infty$, the term $\frac{\ln N}{\beta} \rightarrow 0$, and
    \begin{align}
    \lim_{\beta \rightarrow \infty} h(\bx, \bu) = \lim_{\beta \rightarrow \infty} \frac{1}{\beta} \ln\left( \sum_{i=1}^N \mathrm{e}^{\beta h_i(\bx, \bu)} \right) = \max_{i} h_i(\bx, \bu).
    \end{align}
    Thus, $\underset{\beta \rightarrow \infty}{\lim}\; \mathcal{S}_h = \mathcal{A}_s$.
\end{proof}

\begin{lemma}
    \label{lemma:intersection}
    \normalfont
    (Intersection of constraint sets) Consider the specification given by the intersection of sets i.e., $\mathcal{A}_s = \bigcap_{i=1}^N \mathcal{S}_i$
    where each $\mathcal{S}_i = \{ (\bx, \bu) \in\mathbb{R}^n\times\mathbb{R}^m\mid h_i(\bx, \bu) \geq 0 \}$. Then, the set $\mathcal{A}_s$ is a superset of $\mathcal{S}_h = \{ (\bx, \bu) \in\mathbb{R}^n\times\mathbb{R}^m\mid h(\bx, \bu) \geq 0 \}$, where
    \begin{align}
    h(\bx, \bu) = -\frac{1}{\beta} \ln\left( \sum_{i=1}^N \mathrm{e}^{-\beta h_i(\bx, \bu)} \right) .
    \end{align}
    Furthermore, as $\beta \rightarrow \infty$, $\mathcal{S}_h$ converges to $\mathcal{A}_s$, i.e., $  \underset{\beta \rightarrow \infty}{\lim}\; \mathcal{S}_h = \mathcal{A}_s$.
\end{lemma}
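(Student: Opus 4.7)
The plan is to mirror the argument used for the union case in Lemma \ref{lemma:union}, replacing the soft-max identity with its soft-min counterpart. The key observation is that $-\tfrac{1}{\beta}\ln\!\bigl(\sum_{i=1}^N e^{-\beta x_i}\bigr)$ is the standard log-sum-exp smooth approximation of $\min_i x_i$, and an analogous sandwich inequality will yield both the inclusion $\mathcal{S}_h\subseteq\mathcal{A}_s$ and the limit $\mathcal{S}_h\to\mathcal{A}_s$ as $\beta\to\infty$.

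First I would establish the two-sided bound
\[
e^{-\beta \min_i h_i(\bx,\bu)}\;\leq\;\sum_{i=1}^N e^{-\beta h_i(\bx,\bu)}\;\leq\;N\,e^{-\beta \min_i h_i(\bx,\bu)},
\]
which holds pointwise because one term of the sum equals $e^{-\beta \min_i h_i}$ and each of the $N$ terms is at most that value. Next, I would apply $\ln$ and multiply by $-1/\beta$ (noting the inequality reversal), producing
\[
\min_i h_i(\bx,\bu) - \tfrac{\ln N}{\beta}\;\leq\; h(\bx,\bu)\;\leq\;\min_i h_i(\bx,\bu).
\]
A useful side remark, contrasting with the union case, is that no additive correction term is needed here: the definition $h=-\tfrac{1}{\beta}\ln(\sum e^{-\beta h_i})$ already lies below $\min_i h_i$, so the inclusion direction comes for free.

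Then I would use the right inequality to deduce the containment: if $(\bx,\bu)\in\mathcal{S}_h$ so that $h(\bx,\bu)\geq 0$, then $\min_i h_i(\bx,\bu)\geq h(\bx,\bu)\geq 0$, which means $h_i(\bx,\bu)\geq 0$ for every $i$, i.e., $(\bx,\bu)\in\bigcap_i\mathcal{S}_i=\mathcal{A}_s$. For the limit, the two-sided bound squeezes $h(\bx,\bu)$ between $\min_i h_i(\bx,\bu)-\tfrac{\ln N}{\beta}$ and $\min_i h_i(\bx,\bu)$, so $\lim_{\beta\to\infty}h(\bx,\bu)=\min_i h_i(\bx,\bu)$ pointwise; the set $\mathcal{S}_h$ therefore converges to $\{(\bx,\bu):\min_i h_i(\bx,\bu)\geq 0\}=\mathcal{A}_s$.

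I do not anticipate a serious obstacle: the only subtle point is the inequality reversal when dividing by $-\beta<0$, and a small bookkeeping check that the uniform $\tfrac{\ln N}{\beta}$ gap shrinks to zero (which is immediate). The argument is essentially the dual of Lemma \ref{lemma:union} under the substitution $\beta\mapsto -\beta$ together with the identity $\max_i(-h_i)=-\min_i h_i$, so a single tidy derivation suffices.
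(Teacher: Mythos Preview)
Your proposal is correct and follows essentially the same route as the paper: establish the sandwich inequality $e^{-\beta\min_i h_i}\leq\sum_i e^{-\beta h_i}\leq N e^{-\beta\min_i h_i}$, divide by $-\beta$ to get $\min_i h_i-\tfrac{\ln N}{\beta}\leq h\leq\min_i h_i$, use the upper bound for the inclusion $\mathcal{S}_h\subseteq\mathcal{A}_s$, and send $\beta\to\infty$ for the limit. Your added observations about the sign reversal and the absence of a correction term are accurate and make the write-up slightly cleaner than the paper's.
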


\begin{proof}
    Consider the following inequality,
    \begin{align}
    \mathrm{e}^{-\beta \min_{i} h_i(\bx, \bu)} \leq \sum_{i=1}^N \mathrm{e}^{-\beta h_i(\bx, \bu)} \leq N \mathrm{e}^{-\beta \min_{i} h_i(\bx, \bu)}.
    \end{align}
    Taking the natural logarithm and dividing by $-\beta$:
    \begin{align}
        \min_{i} h_i(\bx, \bu) \geq & -\frac{1}{\beta} \ln\left( \sum_{i=1}^N \mathrm{e}^{-\beta h_i(\bx, \bu)} \right) \nonumber\\
        \geq & \min_{i} h_i(\bx, \bu) - \frac{\ln N}{\beta}.
        \end{align}
    Therefore, if $h(\bx, \bu) \geq 0$, then $\min_{i} h_i(\bx, \bu) \geq 0$, implying that $(\bx, \bu) \in \mathcal{A}_s$. Hence, $\mathcal{S}_h \subseteq \mathcal{A}_s$.
    
    Furthermore, as $\beta \rightarrow \infty$, the term $\frac{\ln N}{\beta} \rightarrow 0$, and
    \begin{align}
        \lim_{\beta \rightarrow \infty} h(\bx, \bu) = &\lim_{\beta \rightarrow \infty} -\frac{1}{\beta} \ln\left( \sum_{i=1}^N \mathrm{e}^{-\beta h_i(\bx, \bu)} \right)\nonumber\\
    =& \min_{i} h_i(\bx, \bu).
        \end{align}
    Thus, $\underset{\beta \rightarrow \infty}{\lim}\; \mathcal{S}_h = \mathcal{A}_s$. Hence proved.
\end{proof}
\begin{remark}
   \normalfont Note that the set $\mathcal{S}_h$ characterized by the UBF $h$ must be an under-approximation of the safe set $\mathcal{A}_s$. This ensures that if the state input pair lies within $\mathcal{S}_h$, it must also lie within $\mathcal{A}_s$.
\end{remark}
\begin{theorem}
    \normalfont
    (General Case) Consider the general state and input constraint specification $\mathcal{A}_s$, possibly involving multiple union and intersection operations and a UBF $h$. Then, the set $\mathcal{A}_s$ is a superset of the set $\mathcal{S}_h = \{ (\bx, \bu) \in\mathbb{R}^n\times\mathbb{R}^m\mid h(\bx, \bu) \geq 0 \}$. Furthermore, as $\beta \rightarrow \infty$, $\mathcal{S}_h$ converges to $\mathcal{A}_s$, i.e., $    \underset{\beta \rightarrow \infty}{\lim}\; \mathcal{S}_h = \mathcal{A}_s$.
    \label{thm:subset}
\end{theorem}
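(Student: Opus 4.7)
The plan is to proceed by induction on the number of levels $L$ of the combined specification $\mathcal{A}_s$. The base case $L=1$ corresponds to a pure union or a pure intersection of the constituent sets, and it is handled exactly by Lemma \ref{lemma:union} and Lemma \ref{lemma:intersection}: both give the inclusion $\mathcal{S}_h \subseteq \mathcal{A}_s$ and the convergence $\mathcal{S}_h \to \mathcal{A}_s$ as $\beta \to \infty$, so the theorem holds in this case.

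For the inductive step I would exploit the recursive structure of $H_{j+1}$ in \eqref{eqn:ubf_math_equation}. The key observation is that the two update rules for combining the running quantity $H_j$ with a new constraint $h_{j+1}$ are exactly the log-sum-exp smoothings of $\max$ and $\min$ in disguise. For $j \in \pn \cup \pu$, the update $H_{j+1} = H_j + \mathrm{e}^{\beta h_{j+1}}$ is the cumulative sum of exponentials that appears in Lemma \ref{lemma:union}. For $j \in \qn \cup \qu$, the reciprocal update yields $c_2(H_{j+1}) = -\frac{1}{\beta}\ln(H_j^{-1} + \mathrm{e}^{-\beta h_{j+1}})$, which is precisely the smoothing in Lemma \ref{lemma:intersection}. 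The offset $c_2(b_N)$ in \eqref{eqn:ubf} compensates for the $\ln N/\beta$ correction demanded by Lemma \ref{lemma:union} at each nested union level, since $b_N$ is the reciprocal of the product of union cardinalities across all levels in $\mathcal{P}'_N$. Grouping the constraints according to the outermost Boolean operator then exhibits $h$ as the combination of inner sub-UBFs built from strictly fewer levels; the inductive hypothesis gives the inclusion for each sub-UBF, and one final application of Lemma \ref{lemma:union} or Lemma \ref{lemma:intersection} at the outermost operator combines them into $\mathcal{S}_h \subseteq \mathcal{A}_s$.

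The convergence claim then follows from the nested-smoothing lemma preceding Theorem \ref{thm:subset}: each $c_2$ applied to a log-sum-exp converges pointwise to the corresponding $\max$ or $\min$ as $\beta \to \infty$, and $c_2(b_N) \to 0$, so the sequential composition of these smoothings converges to the sequential composition of $\max$'s and $\min$'s, which is exactly the Boolean expression characterizing membership in $\mathcal{A}_s$. Hence $\underset{\beta \to \infty}{\lim}\, \mathcal{S}_h = \mathcal{A}_s$.

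The main obstacle I anticipate is the bookkeeping surrounding the reciprocal-form intersection rule and the cumulative correction $c_2(b_N)$. Verifying algebraically that the reciprocal update coincides with the $-\frac{1}{\beta}\ln \sum \mathrm{e}^{-\beta h_i}$ smoothing of $\min$ at each nesting depth, and tracking that the product structure of $b_N$ exactly absorbs the $\ln |\mathcal{P}'_N(i)|/\beta$ offsets accumulated across the nested union layers so that at each stage one recovers the sharp inclusion rather than an approximate one, are the delicate steps; once this bookkeeping is in place, the induction closes by routine application of the two base-case lemmas.
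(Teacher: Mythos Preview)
Your proposal is correct and takes essentially the same approach as the paper: induction over the Boolean structure, with Lemmas~\ref{lemma:union} and~\ref{lemma:intersection} supplying the engine at each step. The only difference is granularity: the paper inducts on the constraint index $n$, adding one set at a time from left to right so that the inductive step is a single application of the union or intersection lemma to the aggregate $h^n$ and the new $h_{n+1}$, which aligns directly with the recursion \eqref{eqn:ubf_math_equation} and keeps the algebra for the reciprocal intersection rule and the $c_2(b_N)$ offset localized; your level-based induction with an outermost-operator decomposition reaches the same conclusion but groups the bookkeeping differently.
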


\begin{proof}
    We will prove this theorem by mathematical induction on the number of operations (levels) in the specification.
For $N = 1$, the specification $\mathcal{A}_s$ consists of a single set $\mathcal{S}_1$, and $h(\bx, \bu) = h_1(\bx, \bu)$. The theorem trivially holds in this case. Define, for $n<N$, the region $\mathcal{A}_s^n$ taken up to the first $n$ levels of specification $\mathcal{A}_s$. Consequently, the set $\mathcal{S}^n_h=\{(\bx,\bu)|\;h^n(\bx,\bu)\geq 0\}$ where $h^n(\bx,\bu)$ (for $n<N$) is defined as 
\begin{align}
    h^n(\bx,\bu)=c_2\left(H_n(\bx,\bu)  \right)+c_2(b_n)
    \label{eqn:h_up_n}
\end{align}
 Now, assume that for general $n<N$, $\underset{\beta\rightarrow\infty}{\lim}\;\;\mathcal{S}^n_h=\mathcal{A}_s^n$ and $\mathcal{A}_s^n\supseteq\mathcal{S}_h^n$ holds true. The task is to then prove that $\underset{\beta\rightarrow\infty}{\lim}\;\;\mathcal{S}^{n+1}_h=\mathcal{A}_s^{n+1}$ and $\mathcal{A}_s^{n+1}\supseteq\mathcal{S}_h^{n+1}$. While moving from step $n$ to step $n+1$, there is either a union operation (i.e., Case 1) or there is an intersection operation (i.e., Case 2). Accordingly, we consider the following two subcases:

\textbf{Case 1:} (Union operation at $n+1$ level)
At the $n+1$ level, suppose the operation is a union. That is, the specification up to level $n+1$ is:
\begin{align}
    \mathcal{A}_s^{n+1} = \mathcal{A}_s^n \cup \mathcal{B}
\end{align}
where $\mathcal{B}$ is a set defined at level $n+1$ with corresponding function $h_{n+1}(\bx,\bu)$. The corresponding function $h^{n+1}(\bx,\bu)$ is then defined using \eqref{eqn:h_up_n}. Since the union operation corresponds to the smooth maximum, we can define:
\begin{align}
    h^{n+1}(\bx,\bu) = &c_2(H_{n+1}(\bx,\bu))+c_2(b_{n+1})\nonumber
\end{align}
Now, $H_{n+1}(\bx,\bu)=H_n(\bx,\bu)+c_1(h_{n+1}(\bx,\bu))$ and using the fact that $H_n(\bx,\bu)=c_2^{-1}\left(h^n(\bx,\bu)-c_2(b_n)\right)$, we have
\begin{align}
      &h^{n+1}(\bx,\bu)\nonumber\\
      &=c_2\left( c_2^{-1}\left(h^n(\bx,\bu)-c_2(b_n)\right)+c_1(h_{n+1}(\bx,\bu))  \right)\nonumber\\&\;\;\;\; + c_2(b_{n+1})  
\end{align}
Using the fact that $c_2^{-1}(r)=c_1(r)$ and $c_1(r_1+r_2)=c_1(r_1)c_1(r_2)$, we have
\begin{align}
         &h^{n+1}(\bx,\bu)\nonumber\\
         &=c_2\left( c_1\left(h^n(\bx,\bu)-c_2(b_n)\right)+c_1(h_{n+1}(\bx,\bu))  \right) + c_2(b_{n+1}) \nonumber\\
                  &=c_2\left( c_1\left(h^n(\bx,\bu)\right)c_1\left(c_2(b_n)\right)+c_1(h_{n+1}(\bx,\bu))  \right) + c_2(b_{n+1}) \nonumber\\
         &=c_2\left( c_1\left(h^n(\bx,\bu)\right)+c_1(h_{n+1}(\bx,\bu))  \right) + c_2(b_{n+1})\nonumber
\end{align}
Substituting $c_1(r) = \mathrm{e}^{\beta r}$ and $c_2(r) = \frac{\ln(r)}{\beta}$, we get:
\begin{align}
    h^{n+1}(\bx,\bu)
   = \frac{1}{\beta} \ln\left( \mathrm{e}^{\beta h^n(\bx,\bu)} + \mathrm{e}^{\beta h_{n+1}(\bx,\bu)} \right) + \frac{\ln b_{n+1}}{\beta}
   \label{eqn:h_n_plus_one}
\end{align}
Our goal is to show that $\mathcal{A}_s^{n+1} \supseteq \mathcal{S}_h^{n+1}$ and that $\underset{\beta\rightarrow\infty}{\lim}\;h^{n+1}(\bx,\bu) = \underset{}{\max}\left\{ h^n(\bx,\bu), h_{n+1}(\bx,\bu) \right\}$.
Consider the following inequality:
\begin{align}
    \mathrm{e}^{\beta \underset{}{\max}\left\{ h^n(\bx,\bu), h_{n+1}(\bx,\bu) \right\}} &\leq \mathrm{e}^{\beta h^n(\bx,\bu)} + \mathrm{e}^{\beta h_{n+1}(\bx,\bu)} \nonumber\\
    &\leq 2 \mathrm{e}^{\beta \underset{}{\max}\left\{ h^n(\bx,\bu), h_{n+1}(\bx,\bu) \right\}}
    \label{eqn:subcase3a_inequality}
\end{align}
Taking the logarithm on both sides and dividing by $\beta$, we obtain:
\begin{align}
   & \underset{}{\max}\left\{ h^n(\bx,\bu), h_{n+1}(\bx,\bu) \right\} \nonumber\\
    &\leq \frac{1}{\beta} \ln\left( \mathrm{e}^{\beta h^n(\bx,\bu)} + \mathrm{e}^{\beta h_{n+1}(\bx,\bu)} \right)\nonumber\\
    &\leq \underset{}{\max}\left\{ h^n(\bx,\bu), h_{n+1}(\bx,\bu) \right\} + \frac{\ln 2}{\beta}
\end{align}
which implies
\begin{align}
    &\frac{1}{\beta} \ln\left( \mathrm{e}^{\beta h^n(\bx,\bu)} + \mathrm{e}^{\beta h_{n+1}(\bx,\bu)} \right) - \frac{\ln 2}{\beta} \nonumber\\
    &\leq \underset{}{\max}\left\{ h^n(\bx,\bu), h_{n+1}(\bx,\bu) \right\}
\end{align}
Using \eqref{eqn:h_n_plus_one} and the fact that $b_{n+1}\leq \frac{1}{2}$, we have
\begin{align}
    h^{n+1}(\bx,\bu) \leq \underset{}{\max}\left\{ h^n(\bx,\bu), h_{n+1}(\bx,\bu) \right\}
\end{align}
Thus, if $h^{n+1}(\bx,\bu) \geq 0$, then $\underset{}{\max}\left\{ h^n(\bx,\bu), h_{n+1}(\bx,\bu) \right\} \geq 0$. This implies that $(\bx,\bu) \in \mathcal{A}_s^{n+1}$. Therefore, $\mathcal{A}_s^{n+1} \supseteq \mathcal{S}_h^{n+1}$.
Taking the limit as $\beta \rightarrow \infty$ on both sides of \eqref{eqn:subcase3a_inequality}, we get:
\begin{align}
    \underset{\beta\rightarrow\infty}{\lim}\;h^{n+1}(\bx,\bu) = \underset{}{\max}\left\{ h^n(\bx,\bu), h_{n+1}(\bx,\bu) \right\}
\end{align}
This shows that $\underset{\beta\rightarrow\infty}{\lim}\;\mathcal{S}_h^{n+1} = \mathcal{A}_s^{n+1}$.

\textbf{Case 2:} (Intersection operation at $n+1$ level)
At the $n+1$ level, suppose the operation is an intersection. The specification up to level $n+1$ is:
\begin{align}
    \mathcal{A}_s^{n+1} = \mathcal{A}_s^n \cap \mathcal{B}
\end{align}
where $\mathcal{B}$ is a set defined at level $n+1$ with corresponding function $h_{n+1}(\bx,\bu)$. The corresponding function $h^{n+1}(\bx,\bu)$ is then defined using the smooth approximation of the intersection operation.
Since the intersection operation corresponds to the smooth minimum, we define:
\begin{align}
    h^{n+1}(\bx,\bu) = &c_2(H_{n+1}(\bx,\bu))+c_2(b_{n+1})\nonumber
\end{align}
Now, $H_{n+1}(\bx,\bu)=\left(H^{-1}_n(\bx,\bu)+c^{-1}_1(h_{n+1}(\bx,\bu))\right)^{-1}$ and using the facts that $c_1^{-1}(r)=c_1(-r)$ and $c_2(r^{-1})=-c_2(r)$, we have
\begin{align}
    h^{n+1}(\bx,\bu)=&-c_2\left(H^{-1}_n(\bx,\bu)+c_1(-h_{n+1}(\bx,\bu))\right)\nonumber\\
    &+c_2(b_{n+1})
\end{align}
Using the fact that $H^{-1}_n(\bx,\bu)=c_2^{-1}(-h^n(\bx,\bu)+c_2(b_n)$ and $c_2^{-1}(r)=c_1(r)$, we have
\begin{align}
     &h^{n+1}(\bx,\bu)\nonumber\\
     &=-c_2\left(c_1(-h^n(\bx,\bu)+c_2(b_n))+c_1(-h_{n+1}(\bx,\bu))\right)\nonumber\\
     &+c_2(b_{n+1}) \nonumber\\
    &=-c_2\left(c_1(-h^n(\bx,\bu))+c_1(-h_{n+1}(\bx,\bu))\right)+c_2(b_{n+1}) \nonumber
\end{align}
 Substituting $c_1(r) = \mathrm{e}^{\beta r}$ and $c_2(r) = \frac{\ln(r)}{\beta}$, we get:
\begin{align}
    h^{n+1}(\bx,\bu) =& -\frac{1}{\beta} \ln\left( \mathrm{e}^{-\beta h^n(\bx,\bu)} + \mathrm{e}^{-\beta h_{n+1}(\bx,\bu)} \right) \nonumber\\
    &+\frac{\ln b_{n+1}}{\beta}
\end{align}
Our goal is to show that $\mathcal{A}_s^{n+1} \supseteq \mathcal{S}_h^{n+1}$ and that $\underset{\beta\rightarrow\infty}{\lim}\;h^{n+1}(\bx,\bu) = \underset{}{\min}\left\{ h^n(\bx,\bu), h_{n+1}(\bx,\bu) \right\}$. Towards this goal, we consider the following inequality:
\begin{align}
    \mathrm{e}^{-\beta \underset{}{\min}\left\{ h^n(\bx,\bu), h_{n+1}(\bx,\bu) \right\}} &\leq \mathrm{e}^{-\beta h^n(\bx,\bu)} + \mathrm{e}^{-\beta h_{n+1}(\bx,\bu)}\nonumber\\
    &\leq 2 \mathrm{e}^{-\beta \underset{}{\min}\left\{ h^n(\bx,\bu), h_{n+1}(\bx,\bu) \right\}}
    \nonumber
\end{align}
Taking the logarithm on both sides and dividing by $-\beta$, we obtain:
\begin{align}
   & \underset{}{\min}\left\{ h^n(\bx,\bu), h_{n+1}(\bx,\bu) \right\}\nonumber\\
    &\geq -\frac{1}{\beta} \ln\left( \mathrm{e}^{-\beta h^n(\bx,\bu)} + \mathrm{e}^{-\beta h_{n+1}(\bx,\bu)} \right)\nonumber\\
    &\geq \underset{}{\min}\left\{ h^n(\bx,\bu), h_{n+1}(\bx,\bu) \right\} + \frac{\ln 2}{\beta}
\end{align}
Using the fact that $b_{n+1}\leq 1/2$, we have
\begin{align}
    -\frac{1}{\beta} \ln\left( \mathrm{e}^{-\beta h^n(\bx,\bu)} + \mathrm{e}^{-\beta h_{n+1}(\bx,\bu)} \right)\geq h^{n+1}(\bx,\bu)
\end{align}
Consequently,
\begin{align}
    h^{n+1}(\bx,\bu) \leq \underset{}{\min}\left\{ h^n(\bx,\bu), h_{n+1}(\bx,\bu) \right\}
    \label{eqn:subcase3b_inequality}
\end{align}
Thus, if $h^{n+1}(\bx,\bu) \geq 0$, then $\underset{}{\min}\left\{ h^n(\bx,\bu), h_{n+1}(\bx,\bu) \right\} \geq 0$. This implies that $(\bx,\bu) \in \mathcal{A}_s^{n+1}$. Therefore, $\mathcal{A}_s^{n+1} \supseteq \mathcal{S}_h^{n+1}$. Taking the limit as $\beta \rightarrow \infty$ on both sides of \eqref{eqn:subcase3b_inequality}, we get:
\begin{align}
    \underset{\beta\rightarrow\infty}{\lim}\;h^{n+1}(\bx,\bu) = \underset{}{\min}\left\{ h^n(\bx,\bu), h_{n+1}(\bx,\bu) \right\}
\end{align}
This shows that $\underset{\beta\rightarrow\infty}{\lim}\;\mathcal{S}_h^{n+1} = \mathcal{A}_s^{n+1}$.

By mathematical induction, we have shown that at each level, whether we have a union or an intersection, the set $\mathcal{A}_s^{n+1}$ is a superset of $\mathcal{S}_h^{n+1}$, and as $\beta \rightarrow \infty$, $\mathcal{S}_h^{n+1}$ converges to $\mathcal{A}_s^{n+1}$. Hence, the theorem is proved.

\end{proof}

}

The main motivation for using the expression $h(\bx,\bu)$ in \eqref{eqn:ubf} comes from the log-sum-exp (LSE) expressions that provide a smooth approximation of the $\max$ and $\min$ functions that characterize the union and intersection operations, respectively.

\subsection{UBF based Quadratic Programs (UBF-QP)\label{subsec:ubf_based_quadratic_programs}}
We now present a methodology to synthesize safe and stabilizing controllers via QPs.
At first glance, it appears that the expression \eqref{eqn:ubf_condition} would be a nonlinear function of $\bu$ in general, for a given $\bx$ when input constraints $h_i(\bu)$ are present. Consequently, it is not possible to synthesize inputs via QPs by modifying the inputs in the control space. To address this limitation, if the input constraints characterized by $h_i(\bu)$ are present, we modify the associated integral controller instead of modifying the control input. In particular, given a feedback controller (which is not necessarily safe or stabilizing), the modified integral controller is given by
\begin{align}
\dot{\bu}=\boldsymbol{\tau}(\bx,\bu)+\bv(\bx,\bu)
\label{eqn:int_controller}
\end{align}
where $\bv(\bx,\bu)$ is the auxiliary control input that ensures safety, stability, and input constraint satisfaction that is yet to be designed. 
\subsubsection{Computation of $\boldsymbol{\tau}(\bx,\bu)$ using Newton-Raphson Flow\label{subsec:integral_controller_newton_raphson_flow}}
We briefly discuss a method to compute the nominal integral controller using the Newton-Raphson flow presented in \cite{wardi2024tracking_integral_control_law}.
{
Consider the output equation $\by=c(\bx)$ where $c:\mathbb{R}^n\rightarrow\mathbb{R}^m$ is a continuously differentiable function. Let $T > 0$ be a fixed value. We define the predicted state trajectory $\boldsymbol{\xi}(\tau)$, where $\tau \in [t, t+T]$, by the differential equation:
\begin{align}
\dot{\boldsymbol{\xi}}(\tau) = F(\boldsymbol{\xi}(\tau), \bu(t))
\label{eqn:prelim_int_controller}
\end{align}
 with the initial condition $\boldsymbol{\xi}(t) = \bx(t)$.
Furthermore, while integrating \eqref{eqn:prelim_int_controller} we keep $t$ fixed and $\tau$ as a variable. 
Define $\tilde{\by}(t+T)$ by:
\begin{align}
\tilde{\by}(t+T) =c(\boldsymbol{\xi}(t+T))=: d(\bx(t), \bu(t))
\label{eqn:y_tilde}
\end{align}
The integral controller is given by:
\begin{align}
\dot{\bu}(t) = \eta \left(\frac{\partial d}{\partial \bu}(\bx(t), \bu(t))\right)^{-1} (\boldsymbol{r}(t+T) - d(\bx(t), \bu(t)))
\label{eqn:int_controller_raphson_flow}
\end{align}
where $\boldsymbol{r}(t)$ is the reference signal and $\eta > 0$ is the controller gain. In this case, $\boldsymbol{\tau}(\bx,\bu)$ in \eqref{eqn:int_controller} is equal to RHS of \eqref{eqn:int_controller_raphson_flow}.
Consider the case where $\eta = 1$ in \eqref{eqn:int_controller_raphson_flow}. The selection of the prediction horizon $T$ can significantly influence the tracking performance. Generally, a smaller $T$ is preferable to ensure minimal prediction errors, which may result in reduced tracking errors compared to a larger $T$ \cite{wardi2024tracking_integral_control_law}.

}
\begin{assumption}
    \normalfont We assume that the origin $\mathbf{0}_n\in\operatorname{Int}\mathcal{S}_h$ (note that $\mathcal{S}_h\subseteq\mathcal{A}_s$). In addition, we assume the existence of a known nominal integral control law $\dot{\bu}=\boldsymbol{\tau}(\bx,\bu)$ for \eqref{eqn:nonlinear_system_dynamics}. Note that methods for synthesizing such integral control laws for these systems have been developed in the literature\cite{wardi2024tracking_integral_control_law}. 
\label{assumption:integral_control_law_known_apriori_ass_5}
\end{assumption}
\subsubsection{UBF-QP}
The expression for $\dot{h}(\bx,\bu)$ can be written as the sum of three terms as follows:
\begin{align}
\dot{h}(\bx,\bu)=& {\frac{1}{\beta H_{N+1}(\bx,\bu)}\left(\frac{\partial H_{N+1}(\bx,\bu)}{\partial \bx}\right)F(\bx,\bu)}\nonumber\\
&+{\frac{1}{\beta H_{N+1}(\bx,\bu)}\frac{\partial H_{N+1}(\bx,\bu)}{\partial \bu}\left(\boldsymbol{\tau}(\bx,\bu)+\bv(\bx,\bu)\right)}\nonumber\\
&=P^a(\bx,\bu)+P^b(\bx,\bu)\bv(\bx,\bu)
\end{align}
where $P^a(\bx,\bu)$ and $P^b(\bx,\bu)$ are given by
\begin{align}
   P^a(\bx,\bu):=&\frac{1}{\beta H_{N+1}(\bx,\bu)}\left(\frac{\partial H_{N+1}(\bx,\bu)}{\partial \bx}F(\bx,\bu)\right.\nonumber\\
   &\left.+\frac{\partial H_{N+1}(\bx,\bu)}{\partial \bu}\boldsymbol{\tau}(\bx,\bu)\right) \nonumber\\
   P^b(\bx,\bu):=&\frac{1}{\beta H_{N+1}(\bx,\bu)}\frac{\partial H_{N+1}(\bx,\bu)}{\partial \bu}
\end{align}
Furthermore, $\dot{H}_N(\bx,\bu)$ can be computed recursively as follows:
\begin{align}
& \dot{H}_0(\bx,\bu)=\left\{\begin{array}{ll}
     &-\beta\mathrm{e}^{-\beta (\dot{V}(\bx)-P(\bx))}(\ddot{V}(\bx)+\dot{P}(\bx)), \; i=0  \\
     & \beta\mathrm{e}^{\beta h_i(\bx)}\dot{h}_i(\bx), \quad i \in \ix\\
     &\beta\mathrm{e}^{\beta h_i(\bu)}\dot{h}_i(\bu), \quad i \in \iu
\end{array}\right.\nonumber\\
& \dot{H}_{j+1}(\bx,\bu)=\nonumber\\
&=\left\{\begin{array}{ll} 
\dot{H}_j(\bx,\bu)+\dot{c}_1(-\dot{V}(\bx)-P(\bx)), \quad j=N+1\\
\dot{H}_j(\bx,\bu)+\dot{c}_1(h_{j+1}(\bx)), \quad j\in \pn\\
\frac{\frac{\dot{H}_j(\bx,\bu)}{\left((H_j(\bx,\bu))^{-1}\right)^2}  +\frac{\dot{c}_1(h_{j+1}(\bx))}{\left(c_1(h_{j+1}(\bx))\right)^2}}{\left((H_j(\bx,\bu))^{-1}+(c_1(h_{j+1}(\bx)))^{-1}\right)^2}, \quad j\in \qn\\
\dot{H}_j(\bx,\bu)+\dot{c}_1(h_{j+1}(\bu)), \quad j\in \pu\\
\frac{\left(\frac{\dot{H}_j(\bx,\bu)}{\left(H_j(\bx,\bu)\right)^2}+\frac{\dot{c}_1(h_{j+1}(\bu)}{\left(c_1(h_{j+1}(\bu)\right)^2}\right)}{\left((H_j(\bx,\bu))^{-1}+(c_1(h_{j+1}(\bu)))^{-1}\right)^2}, \quad j\in \qu
\end{array}\right.
\end{align}
Clearly, the integral controller term appears in the expression for $h_i(\bu)$ defined in $\dot{H}^L_N(\bx,\bu)$. 
Note that for a given $(\bx,\bu)$, the term $\dot{H}^L_N(\bx,\bu)$ is a linear function of the auxiliary control input $\bv(\bx,\bu)$. 
For a given system \eqref{eqn:nonlinear_system_dynamics}, $\boldsymbol{\tau}(\bx,\bu)$ and a UBF $h(\bx,\bu)$, denote by $p^h(\bx)$ and $q^h(\bx,\bu)$ as follows:
\begin{subequations}
    \begin{align}
    &\boldsymbol{p}^h(\bx,\bu)=\frac{\partial h(\bx,\bu)}{\partial\bu},\label{eqn:ph}\\
    &q^h(\bx,\bu)=\left(\frac{\partial h(\bx,\bu)}{\partial \bx}F(\bx,\bu)\right.\nonumber\\
   &\left.+\frac{\partial h(\bx,\bu)}{\partial \bu}\boldsymbol{\tau}(\bx,\bu)\right) +\alpha(h(\bx,\bu))\label{eqn:qh}
\end{align}
\end{subequations}
Consider the following integral control law:
\begin{align}
\dot{\bu}=\boldsymbol{\tau}(\bx,\bu)+\bv^\star(\bx,\bu)
\label{eqn:ubf_integral_controller}
\end{align}
with initial condition $\bu(0)=k(\bx)$ where $\bv^\star(\bx,\bu)$ is the minimizer of the following UBF-QP:
\begin{align}
    \textbf{UBF-QP}:\begin{cases}
&\bv^\star(\bx,\bu):=\;\;\underset{\bv\in\mathbb{R}^m}{\text{argmin}}\;\;\|\bv\|^2\\
&\left(\boldsymbol{p}^h(\bx,\bu)\right)^\mathrm{T}\bv+q^h(\bx,\bu)\geq 0\\
    \end{cases}
    \label{eqn:ubf_qp}
\end{align}
We now have the following results:
\begin{theorem}
    \normalfont The UBF-QP is feasible if  $h$ is a UBF
    \label{thm:ubf_qp_feasible}. Furthermore, $\bv^\star$ is Lipschitz continuous in $(\bx,\bu)$.
    \label{thm:ubf_qp_lipschitz_controller}
\end{theorem}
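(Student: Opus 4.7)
The plan is to handle feasibility and Lipschitz continuity separately, with feasibility following almost directly from the UBF definition and Lipschitz continuity requiring a careful closed-form analysis of the QP.

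For feasibility, the first step is to expand $\dot{h}(\bx,\bu)$ along trajectories of the closed-loop system with the integral controller $\dot{\bu} = \boldsymbol{\tau}(\bx,\bu) + \bv$. Applying the chain rule,
$$\dot{h}(\bx,\bu) = \frac{\partial h}{\partial \bx} F(\bx,\bu) + \frac{\partial h}{\partial \bu}\bigl(\boldsymbol{\tau}(\bx,\bu) + \bv\bigr),$$
so the condition $\dot{h} + \alpha(h) \geq 0$ rearranges exactly into $\bigl(\boldsymbol{p}^h(\bx,\bu)\bigr)^{\mathrm{T}} \bv + q^h(\bx,\bu) \geq 0$ using the definitions in \eqref{eqn:ph}--\eqref{eqn:qh}. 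Since $h$ is a UBF, Definition \ref{defn:ubf} guarantees that for every $(\bx,\bu)$ there exists such a $\bv \in \mathbb{R}^m$, so the feasible set of the UBF-QP is non-empty and feasibility follows.

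For Lipschitz continuity, the plan is to exploit the fact that UBF-QP is a strictly convex quadratic program with a single affine inequality constraint, so it admits a unique minimizer given in closed form by the KKT conditions. Writing out the KKT system with multiplier $\lambda \geq 0$ yields $2\bv^{\star} = \lambda \boldsymbol{p}^h$ together with complementary slackness, leading to
$$\bv^{\star}(\bx,\bu) = \frac{\bigl[-q^h(\bx,\bu)\bigr]_{+}}{\|\boldsymbol{p}^h(\bx,\bu)\|^{2}}\,\boldsymbol{p}^h(\bx,\bu),$$
where $[\,\cdot\,]_{+} := \max\{0,\cdot\}$, with the understanding that $\bv^{\star} = 0$ when $q^h \geq 0$. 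Continuity across the switching surface $q^h = 0$ is immediate. Because $h \in C^{1}$ by definition and $F, \boldsymbol{\tau}$ are (locally) Lipschitz by Assumption \ref{assumption:integral_control_law_known_apriori_ass_5} and the smoothness of \eqref{eqn:nonlinear_system_dynamics}, both $\boldsymbol{p}^h$ and $q^h$ are locally Lipschitz in $(\bx,\bu)$; the map $[\,\cdot\,]_{+}$ is globally $1$-Lipschitz; and the rational map $(s,\boldsymbol{p}) \mapsto s\,\boldsymbol{p}/\|\boldsymbol{p}\|^{2}$ is locally Lipschitz away from $\boldsymbol{p} = 0$. Composing these, $\bv^{\star}$ is locally Lipschitz on any subset of $(\bx,\bu)$ on which $\|\boldsymbol{p}^h\|$ is bounded away from zero whenever $q^h < 0$.

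The main obstacle I expect is precisely this non-degeneracy of $\boldsymbol{p}^h$: a pointwise non-vanishing argument is easy (if $\boldsymbol{p}^h(\bx,\bu) = 0$ and $q^h(\bx,\bu) < 0$, the QP constraint cannot be met, contradicting the UBF property), but upgrading this to a uniform lower bound on $\|\boldsymbol{p}^h\|$ on the operating region is what actually yields Lipschitz continuity rather than mere continuity. The plan to overcome this is to restrict the analysis to a compact sublevel set of $h$ (guaranteed non-empty by Assumption \ref{assumption:integral_control_law_known_apriori_ass_5} and the structure of $\mathcal{S}_h$), invoke continuity of $\boldsymbol{p}^h$ and $q^h$ to produce a uniform lower bound on $\|\boldsymbol{p}^h\|$ on the closed subset where $q^h \leq -\varepsilon$, and combine with the fact that $\bv^{\star} \equiv 0$ on the complementary region $q^h \geq 0$. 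The resulting piecewise locally Lipschitz bounds then patch together via the continuity at the switching boundary to yield the claimed Lipschitz continuity of $\bv^{\star}$ in $(\bx,\bu)$.
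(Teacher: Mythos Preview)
Your proposal is correct and follows essentially the same approach as the paper: feasibility directly from the UBF definition, then the closed-form KKT solution of the single-constraint strictly convex QP, followed by a case analysis on the sign of $q^h$ and a continuity check at the switching surface. The paper simply asserts that $\|\boldsymbol{p}^h\|$ is bounded away from zero in the active case without the compactness argument you flag, so your treatment of the non-degeneracy issue is in fact slightly more careful than the original.
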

\begin{proof}
{     
Given that $ h $ is a UBF, by Definition \ref{defn:ubf}, it satisfies the condition \eqref{eqn:ubf_condition} for all $(\bx,\bu) \in \mathcal{A}$
where $ \alpha\in\mathcal{K}_\infty$. This condition translates to
\begin{align}
p^h(\bx, \bu)^\mathrm{T} \bv \geq q^h(\bx, \bu).
\label{eqn:qp_constraint}
\end{align}
Since $ h $ is a UBF, it ensures that $ \dot{h}(\bx, \bu) + \alpha(h(\bx, \bu)) \geq 0 $ can be satisfied. Therefore, there exists a $ \bv $ such that the constraint \eqref{eqn:qp_constraint} holds. This implies that the feasible set of the UBF-QP is nonempty, and thus the UBF-QP is feasible.
The UBF-QP \eqref{eqn:ubf_qp} is a convex quadratic program with a linear constraint.
 The analytical solution to this QP is:
\begin{align}
\bv^\star(\bx, \bu) = \begin{cases}
\displaystyle \frac{q^h(\bx,\bu)}{\| p^h(\bx, \bu) \|^2} p^h(\bx, \bu), & \text{if } q^h(\bx,\bu) > 0, \\
0, & \text{if } q^h(\bx,\bu) \leq 0.
\end{cases}
\label{eqn:optimal_control}
\end{align}
We need to show that $ \bv^\star(\bx, \bu) $ is Lipschitz continuous in $ (\bx, \bu) $.
First, note that $ \boldsymbol{p}^h(\bx,\bu) $, $ q^h(\bx,\bu) $, and $ h(\bx, \bu) $ are continuously differentiable functions of $ (\bx, \bu) $, given the smoothness of $ h $ and the system dynamics \eqref{eqn:nonlinear_system_dynamics}. 
We consider two cases:

\textbf{Case 1}: $ q^h(\bx, \bu) > 0 $:
In this case, the optimal control that solves \eqref{eqn:ubf_qp} is given by:
\begin{align}
\bv^\star(\bx, \bu) = \frac{q^h(\bx,\bu)}{\| \boldsymbol{p}^h(\bx,\bu) \|^2} \boldsymbol{p}^h(\bx,\bu).
\end{align}
Both $ q^h(\bx,\bu) $ and $ \boldsymbol{p}^h(\bx,\bu) $ are Lipschitz continuous, and $ \| \boldsymbol{p}^h(\bx,\bu) \|^2 $ is Lipschitz continuous and bounded away from zero (since $ \boldsymbol{p}^h(\bx,\bu) $ is nonzero when $ q^h(\bx,\bu) > 0 $). Therefore, the quotient $ \frac{q^h(\bx,\bu)}{\| \boldsymbol{p}^h(\bx,\bu) \|^2} $ is Lipschitz continuous. The product of Lipschitz continuous functions is Lipschitz continuous, so $ \bv^\star(\bx, \bu) $ is Lipschitz continuous in this case.

\textbf{Case 2}: $ q^h(\bx,\bu) \leq 0 $:
Here, $ \bv^\star(\bx, \bu) = 0 $, which is trivially Lipschitz continuous.
At the boundary where $ q^h(\bx,\bu) = 0 $, we need to ensure that $ \bv^\star(\bx, \bu) $ does not have a discontinuity. As $ q^h(\bx,\bu) \rightarrow 0^+ $, $ \bv^\star(\bx, \bu) \rightarrow 0 $ because $ q^h(\bx,\bu) $ tends to zero and $ \boldsymbol{p}^h(\bx,\bu) $ is bounded.
Therefore, $ \bv^\star(\bx, \bu) $ is continuous at $ q^h(\bx,\bu) = 0 $.

}

\end{proof}
\begin{remark}
    \normalfont From Theorem \ref{thm:ubf_qp_lipschitz_controller}, $\bv^\star(\bx,\bu)$ (obtained from \eqref{eqn:ubf_qp}) is Lipschitz continuous. Furthermore, as $\boldsymbol{\tau}(\bx,\bu)$ is a continuously differentiable function, it follows from Picard-Lindel\"{o}f theorem \cite{hartman2002ordinary}, the existence and uniqueness of the closed loop solution of \eqref{eqn:ubf_integral_controller} is guaranteed around a neighborhood of the current state $\bx$.
    \end{remark}
\begin{remark}
    \normalfont Note that the proposed UBF-QP is not devoid of the difficulty of finding a UBF (Definition \ref{defn:ubf}). Rather, it provides a framework to address the limitations presented in Section \ref{sec:motivation} which are common in practical applications. 
\end{remark}
\begin{remark}
    \normalfont Note that the UBF-QP \eqref{eqn:ubf_qp} can be easily modified to handle time-varying safety specifications. This can be particularly important for some practical applications. For instance, consider a collision avoidance problem where the task is to avoid static/dynamic obstacles and reach the goal position. In that case, it would be advantageous to consider the obstacles present only in the field of view. Furthermore, in case of an actuator failure, it would be restrictive to have both stability and safety guarantees instead of safety only.
\end{remark}
The following theorem provides a method for synthesizing safe and stabilizing feedback controllers via a UBF.
\begin{theorem}
\normalfont Under Assumptions \ref{assumption:stabilizing_assum_1}-\ref{assumption:integral_control_law_known_apriori_ass_5}, for the system \eqref{eqn:nonlinear_system_dynamics}, if there exists a UBF $h(\bx,\bu)$, then any Lipschitz continuous controller $\bu\in \mathcal{K}_{\text{UBF}}(\bx)$ renders the set $\mathcal{S}_h\subset\mathcal{A}_s$ (where $\mathcal{A}_s$ is defined in \eqref{eqn:state_and_input_constraint_set_complex}) characterized by the UBF; $\mathcal{S}_h=\{(\bx,\bu)\;:\;h(\bx,\bu)\geq 0\}$ forward invariant.
\end{theorem}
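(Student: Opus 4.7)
The plan is to reduce this statement to the standard forward-invariance argument used in the CBF literature (as in Theorem \ref{thm:cbf}), but now applied in the \emph{extended} state space $\mathbb{R}^n\times\mathbb{R}^m$, where $\bu$ itself evolves according to the integral control law \eqref{eqn:ubf_integral_controller}. First, I would observe that under the controller $\bv^\star(\bx,\bu)$ given by the UBF-QP \eqref{eqn:ubf_qp}, the closed-loop extended dynamics are
\begin{align*}
\dot{\bx}=F(\bx,\bu),\qquad \dot{\bu}=\boldsymbol{\tau}(\bx,\bu)+\bv^\star(\bx,\bu),
\end{align*}
and that, by Theorem \ref{thm:ubf_qp_lipschitz_controller} together with Assumption \ref{assumption:integral_control_law_known_apriori_ass_5}, the right-hand side is Lipschitz continuous in $(\bx,\bu)$. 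Consequently, by the Picard--Lindel\"of theorem, for every initial condition $(\bx_0,\bu_0)\in\mathcal{S}_h$ there exists a unique absolutely continuous solution $(\bx(t),\bu(t))$ defined on some maximal interval.

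Second, I would exploit the feasibility of the UBF-QP. Since by assumption $h$ is a UBF, Theorem \ref{thm:ubf_qp_feasible} guarantees that the constraint in \eqref{eqn:ubf_qp} is satisfiable, so $\bv^\star(\bx,\bu)$ satisfies
\begin{align*}
\bigl(\boldsymbol{p}^h(\bx,\bu)\bigr)^\mathrm{T}\bv^\star(\bx,\bu)+q^h(\bx,\bu)\geq 0
\end{align*}
pointwise. Expanding this inequality using the definitions \eqref{eqn:ph}--\eqref{eqn:qh} and rearranging yields, along any closed-loop trajectory,
\begin{align*}
\dot{h}(\bx(t),\bu(t))\geq -\alpha\bigl(h(\bx(t),\bu(t))\bigr),
\end{align*}
where $\alpha$ is the class-$\mathcal{K}_\infty$ function from Definition \ref{defn:ubf}.

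Third, I would invoke the comparison lemma in the standard CBF fashion. Let $\phi(t):=h(\bx(t),\bu(t))$ and let $y(t)$ solve $\dot{y}=-\alpha(y)$ with $y(0)=\phi(0)\geq 0$. Because $\alpha\in\mathcal{K}_\infty$, the scalar flow satisfies $y(t)\geq 0$ for all $t\geq 0$ (the origin is stable and trajectories from non-negative initial data remain non-negative). The differential inequality $\dot{\phi}\geq -\alpha(\phi)$ combined with the comparison lemma then gives $\phi(t)\geq y(t)\geq 0$ for all $t$ in the interval of existence, so $(\bx(t),\bu(t))\in\mathcal{S}_h$. Because $\mathcal{S}_h$ is contained in the compact specification set $\mathcal{A}_s$ (Theorem \ref{thm:subset}, together with Assumption \ref{assumption:assum_2}), trajectories cannot escape in finite time, and the solution extends globally, establishing forward invariance of $\mathcal{S}_h$.

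The main obstacle I anticipate is the regularity issue at the boundary $q^h(\bx,\bu)=0$, where the closed-form solution \eqref{eqn:optimal_control} switches between the two branches. This is already handled by Theorem \ref{thm:ubf_qp_lipschitz_controller}, which established Lipschitz continuity of $\bv^\star$ across the switch; I would simply reference it. A secondary, more conceptual subtlety is clarifying that the relevant ``state'' for the forward-invariance argument is the extended pair $(\bx,\bu)$ rather than $\bx$ alone, because $h$ depends on both arguments and $\bu$ evolves dynamically via the integral law; this is precisely why the UBF condition was formulated on $\mathbb{R}^n\times\mathbb{R}^m$ in Definition \ref{defn:ubf} and why the reduction to the standard CBF-type argument goes through.
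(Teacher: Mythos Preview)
Your proposal is correct and follows essentially the same route as the paper: establish Lipschitz closed-loop dynamics for existence/uniqueness, derive the differential inequality $\dot h\ge -\alpha(h)$ along trajectories, and apply the comparison lemma against the scalar ODE $\dot y=-\alpha(y)$, $y(0)=h(\bx_0,\bu_0)$, to conclude $h(\bx(t),\bu(t))\ge y(t)\ge 0$. Your framing in the extended state $(\bx,\bu)$ and the additional step of extending solutions globally via compactness of $\mathcal{A}_s$ are made more explicit than in the paper, but the core argument is identical.
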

\begin{proof}
    \normalfont 
 The set of interest is $\mathcal{S}_h = \{ (\bx,\bu) \in \mathbb{R}^n\times\mathbb{R}^m \mid h(\bx, \bu) \geq 0,\;\;\bu\in\mathcal{K}_{\text{UBF}}(\bx)\}$. In the subsequent discussion, let $u\in\mathcal{K}_{\text{UBF}}(\bx)$.
 Under the control law $\bu \in \mathcal{K}_{\text{UBF}}(\bx)$, the closed-loop system dynamics becomes
\begin{align}
    \dot{\bx} = F(\bx, \bu),\quad\bu\in\mathcal{K}_{\text{UBF}}(\bx)
    \label{eq:closed_loop_dynamics},\quad \bx(0)=\bx_0
\end{align}
Since $F$ and $\bu$ are continuously differentiable and Lipschitz continuous (Theorem \ref{thm:ubf_qp_lipschitz_controller}), respectively, the closed loop solution of \eqref{eq:closed_loop_dynamics} exists and is unique around the neighborhood of the current state $\bx$. 
Now, consider the scalar differential equation
\begin{align}
    \dot{y}(t) = -\alpha(y(t)),\quad y(0) = h(\bx_0, \bu_0)
    \label{eq:y_dynamics}
\end{align}
where $\alpha$ is a class $\mathcal{K}_\infty$ function. From Lemma 4.4 of \cite{khalil1996nonlinear}, solution of \eqref{eq:y_dynamics} can be expressed as $y(t)=\sigma(\bx_0,t)$ where $\sigma$ is a class $\mathcal{KL}$ function. This implies that
$y(t)$ remains non-negative for all $t \geq 0$, provided $y(0) \geq 0$. Comparing with \eqref{eq:y_dynamics}, we observe that the function $h(\bx, \bu)$ satisfies $ \dot{h}(\bx, \bu) \geq \dot{y}(t)$
with $h(\bx_0, \bu_0) = y(0)$. By the Comparison Lemma \cite{khalil1996nonlinear}, if a continuous function (${h}(\bx(t),\bu(t))$ in this case) satisfies an inequality of the form $\dot{h}(\bx(t),\bu(t)) \geq \dot{y}(t)$ with $y(0)=h(\bx_0,\bu_0)$, then
\begin{align}
    h(\bx, \bu) \geq y(t)\geq 0, \quad \forall t \geq 0.
    \label{eq:h_geq_y}
\end{align}
Therefore, if the system \eqref{eqn:nonlinear_system_dynamics} starts in the set $\mathcal{S}_h$ (i.e., $h(\bx_0, \bu_0) \geq 0$), it will remain in $\mathcal{S}_h$ for $t\geq 0$. 
Hence, the result follows.
\end{proof}
\begin{corollary}
    \normalfont Consider an integral control law $\dot{\bu}=\boldsymbol{\tau}(\bx,\bu)$. Under the assumptions \ref{assumption:stabilizing_assum_1}-\ref{assumption:integral_control_law_known_apriori_ass_5}, the system \eqref{eqn:nonlinear_system_dynamics}, the corresponding integral controller $\dot{\bu}=\boldsymbol{\tau}(\bx,\bu)$ and an UBF $h(\bx,\bu)$, then the integral controller \eqref{eqn:ubf_integral_controller} where $\bv^\star$ is the solution to \eqref{eqn:ubf_qp}, renders the set $\mathcal{S}_h=\{(\bx,\bu)\in\mathbb{R}^n\times \mathbb{R}^m\;\mid\;h(\bx,\bu)\geq 0\}$ forward invariant.
\end{corollary}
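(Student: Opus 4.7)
The plan is to treat this corollary as a direct consequence of the preceding theorem applied to the extended dynamical system whose augmented state $(\bx,\bu)$ evolves according to $\dot{\bx}=F(\bx,\bu)$ and $\dot{\bu}=\boldsymbol{\tau}(\bx,\bu)+\bv^\star(\bx,\bu)$. The key observation is that, by construction, the UBF-QP in \eqref{eqn:ubf_qp} encodes the UBF differential inequality \eqref{eqn:ubf_condition} as a pointwise linear constraint on $\bv$, and its feasibility (guaranteed by Theorem \ref{thm:ubf_qp_feasible} since $h$ is a UBF) forces this inequality to hold along the closed-loop trajectories induced by \eqref{eqn:ubf_integral_controller}.

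I would proceed as follows. First, invoke Theorem \ref{thm:ubf_qp_lipschitz_controller} to conclude that $\bv^\star(\bx,\bu)$ is Lipschitz continuous in $(\bx,\bu)$, so that together with the continuous differentiability of $F$ and the regularity of $\boldsymbol{\tau}$ from Assumption \ref{assumption:integral_control_law_known_apriori_ass_5}, Picard-Lindel\"{o}f guarantees local existence and uniqueness of solutions to \eqref{eqn:ubf_integral_controller}. Second, differentiate $h(\bx,\bu)$ along these solutions using the chain rule and the definitions \eqref{eqn:ph}--\eqref{eqn:qh}; the resulting identity reads
\begin{align}
\dot{h}(\bx,\bu)+\alpha(h(\bx,\bu)) = (\boldsymbol{p}^h(\bx,\bu))^{\mathrm{T}}\bv^\star(\bx,\bu)+q^h(\bx,\bu).\nonumber
\end{align}
The linear constraint embedded in \eqref{eqn:ubf_qp} ensures that the right-hand side is non-negative, yielding $\dot{h}\geq -\alpha(h)$ along every closed-loop trajectory. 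Third, apply the comparison lemma, exactly as in the proof of the preceding theorem, to deduce that $h(\bx(t),\bu(t))\geq 0$ for all $t\geq 0$ whenever $h(\bx_0,\bu_0)\geq 0$, which is precisely the forward invariance of $\mathcal{S}_h$.

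The main obstacle is bookkeeping rather than any deep analytical difficulty: one must verify that the chain-rule computation of $\dot{h}$ along the extended closed-loop dynamics matches exactly the decomposition $P^a(\bx,\bu)+P^b(\bx,\bu)\bv$ derived recursively in Section \ref{subsec:ubf_based_quadratic_programs}, so that the nominal integral term $\boldsymbol{\tau}(\bx,\bu)$ is absorbed into $q^h$ while only $\bv^\star$ appears multiplied by $\boldsymbol{p}^h$. A secondary but milder concern is the continuity of $\bv^\star$ at the boundary $q^h=0$, but this is already resolved in Theorem \ref{thm:ubf_qp_lipschitz_controller}, where the quotient form of $\bv^\star$ is shown to vanish smoothly, so no discontinuity arises to disrupt the closed-loop Lipschitz analysis required for invoking the comparison lemma.
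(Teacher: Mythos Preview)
Your proposal is correct and follows essentially the same route as the paper: the paper's one-line proof simply observes that under the integral control law \eqref{eqn:ubf_integral_controller} the UBF condition \eqref{eqn:ubf_condition} becomes precisely the linear constraint $(\boldsymbol{p}^h)^{\mathrm{T}}\bv+q^h\geq 0$ enforced by the QP, and then defers to the preceding theorem for forward invariance. Your version spells out the chain-rule identity, the Lipschitz/well-posedness step, and the comparison-lemma argument explicitly, but the logical skeleton is identical.
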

\begin{proof}
    \normalfont Given the integral control law $\dot{\bu}=\boldsymbol{\tau}(\bx,\bu)+\bv(\bx,\bu)$, the condition in \eqref{eqn:ubf_condition} translates to $\left(\boldsymbol{p}^h(\bx,\bu)\right)^\mathrm{T}\bv+q^h(\bx,\bu)\geq 0$ and hence the result follows.
\end{proof}
\begin{remark}
    \normalfont The necessary and sufficient condition for the existence of the solution to \eqref{eqn:ubf_qp} is that if $\boldsymbol{p}^h(\bx,\bu)=\mathbf{0}_m$ if and only if $q^h(\bx,\bu)\geq 0$.
\end{remark}
\subsection{Existence of UBF\label{subsec:existence_of_ubf}}
In this section, we provide sufficient conditions under which there exists a UBF. This result is crucial because it assures that, under mild conditions, one can systematically construct a Lipschitz continuous feedback controller that achieves safe and stable behavior for a general class of nonlinear systems.
\begin{theorem}
    \normalfont $(\textbf{Existence of UBF})$
Under Assumptions \ref{assumption:stabilizing_assum_1}-\ref{assumption:integral_control_law_known_apriori_ass_5}, consider the system \eqref{eqn:nonlinear_system_dynamics} and let $\mathcal{A}_s$ represent a given specification for which there exists a function $h: \mathbb{R}^n\times\mathbb{R}^m \rightarrow \mathbb{R}$ that is continuously differentiable and satisfies the condition. Assume that $\mathcal{S}_h$ $\subset \mathcal{A}_s$ is a safe set, such that there exists a compact set $\mathcal{U} \subset \mathbb{R}^m$ and a locally Lipschitz controller $\bu: \mathbb{R}^m \rightarrow \mathcal{U}$ that ensures safety, i.e., forward invariance. Then, there exists a UBF. 
    \label{thm:existence_ubf}
\end{theorem}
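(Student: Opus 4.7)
The plan is to use the hypothesized safe feedback controller to construct, for the candidate $h$ given by \eqref{eqn:ubf_math_equation}, an admissible auxiliary input $\bv$ and a class $\mathcal{K}_\infty$ function $\alpha$ that together verify condition \eqref{eqn:ubf_condition}. First I would reinterpret the hypothesis: by assumption there exists a locally Lipschitz safe controller $\bv^{\star}$ taking values in the compact input set $\mathcal{U}$ such that the extended closed-loop dynamics $\dot{\bx}=F(\bx,\bu)$, $\dot{\bu}=\boldsymbol{\tau}(\bx,\bu)+\bv^{\star}(\bx,\bu)$ render $\mathcal{S}_h$ forward invariant. Applying Nagumo's theorem on the boundary $\partial\mathcal{S}_h=\{(\bx,\bu):h(\bx,\bu)=0\}$ then yields $\dot{h}(\bx,\bu)\ge 0$ everywhere on $\partial\mathcal{S}_h$, so the UBF inequality is already validated on the boundary for any $\alpha$ with $\alpha(0)=0$.

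Second, I would establish compactness of $\mathcal{S}_h$. By Assumption \ref{assumption:assum_2} each $\mathcal{S}_i$ and $\mathcal{U}_i$ is compact, hence the composite set $\mathcal{A}$ built from finitely many unions and intersections of these sets is compact; since $\mathcal{S}_h\subset\mathcal{A}_s\subset\mathcal{A}$ and $\mathcal{S}_h$ is closed as the super-level set of the continuous function $h$, the set $\mathcal{S}_h$ is compact. Using continuity of $F$, $\boldsymbol{\tau}$, and $\bv^{\star}$ together with the $C^1$ regularity of $h$, the map $(\bx,\bu)\mapsto\dot h(\bx,\bu)$ under $\bv^\star$ is continuous and therefore bounded on $\mathcal{S}_h$.

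Third, I would construct $\alpha$ explicitly from the monotone envelope
\begin{equation*}
\psi(r):=\max\Bigl(0,\;\sup\{-\dot h(\bx,\bu):(\bx,\bu)\in\mathcal{S}_h,\;h(\bx,\bu)\le r\}\Bigr),\quad r\ge 0.
\end{equation*}
By Nagumo $\psi(0)=0$; $\psi$ is non-decreasing and bounded; and the continuity of $\dot h$ on the compact $\mathcal{S}_h$ implies right-continuity of $\psi$ at $0$. Any continuous, strictly increasing $\mathcal{K}_\infty$ majorant $\alpha$ of $r\mapsto\psi(r)+r$ (obtained by a standard piecewise interpolation-and-smoothing argument) then satisfies $\alpha(h(\bx,\bu))\ge -\dot h(\bx,\bu)$ for every $(\bx,\bu)\in\mathcal{S}_h$, so that \eqref{eqn:ubf_condition} holds with $\bv=\bv^{\star}$ and the chosen $\alpha$. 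Hence $h$ is a UBF in the sense of Definition \ref{defn:ubf}.

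The principal obstacle is the regularity of $\psi$ near the origin: one must show $\lim_{r\downarrow 0}\psi(r)=0$ uniformly, equivalently $\liminf_{(\bx,\bu)\to\partial\mathcal{S}_h}\dot h(\bx,\bu)\ge 0$ along every approach, and not just along orbits of $\bv^{\star}$. This follows from Nagumo together with continuity of $\dot h$ on a compact neighborhood of $\partial\mathcal{S}_h$, but it must be checked at boundary points where $\nabla h$ may degenerate. A secondary technical step is the smoothing used to upgrade the non-decreasing envelope $\psi+r$ to a continuous, strictly increasing class $\mathcal{K}_\infty$ function; this is classical but worth being explicit about so that the resulting $\alpha$ is genuinely admissible in Definition \ref{defn:ubf}.
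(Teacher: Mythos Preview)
Your argument is correct and takes a genuinely different route from the paper. The paper does not verify the candidate $h$ of \eqref{eqn:ubf_math_equation} directly; instead it forms the augmented closed loop $\dot{\bz}=F^a(\bz)$, $\bz=(\bx,\bu)$, under the nominal integral law $\dot{\bu}=\boldsymbol{\tau}(\bx,\bu)$, defines the value function $V_\infty(\bz)=\inf_{s\ge 0} h(\varphi(s;\bz))$ along its trajectories, and then invokes a Lin--Sontag smoothing lemma to produce a $C^\infty$ approximant $\Psi$ of $V_\infty$ satisfying $\nabla\Psi^\mathrm{T} F^a\ge -\bar\alpha(\Psi)$ on $\operatorname{Int}(\mathcal{S}_h)$; this $\Psi$ (extended by zero on $\partial\mathcal{S}_h$) is then declared the UBF. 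Your approach instead keeps the original $h$ and manufactures $\alpha$ via the monotone envelope of $-\dot h$ over sublevel sets, using Nagumo plus compactness to force $\psi(0^+)=0$. This is more elementary---no external smoothing lemma---and has the conceptual advantage that the certified UBF is exactly the structured function of Definition~\ref{defn:ubf}, whereas the paper's $\Psi$ is not obviously of the form \eqref{eqn:ubf_math_equation}. The paper's value-function construction, on the other hand, yields a function whose zero-superlevel set is the maximal forward-invariant subset of $\mathcal{S}_h$ under the given controller, which carries extra geometric information and generalizes more readily when $h$ has boundary points with $\nabla h=0$ on large strata. Your flagged obstacle---right-continuity of $\psi$ at $0$---is real but, as you note, is handled by the compactness/continuity subsequence argument; both proofs ultimately establish the UBF inequality only on $\mathcal{S}_h$, which is consistent with how the inequality is used downstream in the UBF-QP.
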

\begin{proof}
    \normalfont {Consider the augmented nonlinear system formed by augmenting \eqref{eqn:nonlinear_system_dynamics} and the integral controller $\dot{\bu}=\boldsymbol{\tau}(\bx,\bu)$ (with $\bu(0)=\bu_0$) as follows:
    \begin{align}
        \dot{\bz}=F^a(\bz),\quad \bz(0)=[\bx_0,\;\bu_0]^\mathrm{T}
        \label{eqn:augmented_system}
    \end{align}
    where $\bz=[\bx,\;\bu]^\mathrm{T}$, $F^a(\bz)=[F(\bx,\bu),\;\boldsymbol{\tau}(\bx,\bu)]^\mathrm{T}$ and $\operatorname{Int}\mathcal{S}_h$. We define a cost function $V: \mathbb{R}^n \times\mathbb{R}^m\times \mathbb{R}_{>0} \rightarrow \mathbb{R}$ as:
\begin{align}
V(\bz, t)=\min _{s \in[0, t]} h(\varphi(s;\bz)),
\label{eqn:v_construct_based_on_h}
\end{align}
where $h$ is a candidate UBF and $\bu$ obtained from solving $\dot{\bu}=\boldsymbol{\tau}(\bx,\bu)$ (with $\bu(0)=\bu_0$) is assumed to be a safe controller (Assumption \ref{assumption:stabilizing_assum_1}). Note that $V(\bz, t)$ in \eqref{eqn:v_construct_based_on_h} represents the minimum value of $h$ along the system trajectories $\varphi(\cdot)$ \eqref{eqn:augmented_system}, given initial condition $\bz$, and final time $t\geq 0$. Note that, in \eqref{eqn:v_construct_based_on_h}, we omit the maximization over all possible controllers $\bu\in\mathbb{R}^m$ and solely utilize $\bu$ (safe controller). By extending $V$ for infinite time as $V_{\infty}(\bz):=\underset{t\rightarrow\infty}{\lim}\;V(\bz,t)$, we obtain a time-invariant function. Note that as $\bu$ is a safe controller (i.e., the compact set $\mathcal{S}_h$ remains forward invariant) and $h$ is a continuously differentiable function, $V_\infty$ exists. The zero-superlevel set of $V_{\infty}(\bz)$ constitutes the largest forward invariant set of $\dot{\bz}=F^a\left(\bz\right)$ contained within $\mathcal{S}_h=\left\{(\bx,\bu) \in \mathbb{R}^n\times\mathbb{R}^m : V_{\infty}(\bz) \geq 0\right\}$. Furthermore, since $\bu$ maintains all trajectories of \eqref{eqn:augmented_system} within $\operatorname{Int}(\mathcal{S}_h)$ for all times, and these trajectories do not approach $\partial \mathcal{S}_h$ arbitrarily closely, we have $V_{\infty}(\bx)>0$ for all $\bz \in \operatorname{Int}(\mathcal{S}_h)$. For all points in $\mathcal{S}_h$ where the gradient of $V_{\infty}$ exists:
\begin{align}
\nabla V_{\infty}(\bz)^\mathrm{T} F^a\left(\bz\right) \geq-\alpha\left(V_{\infty}(\bz)\right)
\label{eqn:v_infty_condition}
\end{align}
for any class $\mathcal{K}_{\infty}$ function $\alpha$. However, $V_{\infty}$ might not be differentiable at all points, potentially disqualifying it as a valid UBF. Nevertheless, given that $F^a, \bu$, and $h$ are locally Lipschitz, $V_{\infty}$ is differentiable almost everywhere. To obtain a valid UBF, we smoothen $V_{\infty}$. We begin by demonstrating that $V_{\infty}$ can be smoothened at the interior of $\mathcal{S}_h$, ensuring \eqref{eqn:v_infty_condition} holds for all $\bz \in \operatorname{Int}(\mathcal{S}_h)$ for the smoothed version of $V_{\infty}$.
In addition, note that there exists a smooth function $\Psi: \operatorname{Int}(\mathcal{S}_h) \rightarrow \mathbb{R}$ such that for all $\bz \in \operatorname{Int}(\mathcal{S}_h)$ \cite{lin1996smooth}:
\begin{align}
\left|V_{\infty}(\bz)-\Psi(\bz)\right|&<\min \left\{\frac{1}{2} V_{\infty}(\bz), 1\right\} \\
\nabla \Psi(\bz)^\mathrm{T} F^a\left(\bz\right) &\geq-2 \alpha\left(V_{\infty}(\bz)\right)
\end{align}
Given that $V_{\infty}(\bz)>0$ for all $\bz \in \operatorname{Int}(\mathcal{S}_h)$, it follows that $\Psi(\bz)>V_{\infty}(\bz)- \frac{1}{2} V_{\infty}(\bz)=\frac{1}{2} V_{\infty}(\bz)>0$ for all $\bz \in \operatorname{Int}(\mathcal{S}_h)$. We then extend $\Psi$ to $\partial \mathcal{S}_h$ such that $\Psi(\bz)=0$ for all $\bz \in \partial \mathcal{S}_h$. Consequently, $\Psi$ is smooth in $\operatorname{Int}(\mathcal{S}_h)$ and continuous in $\mathcal{S}_h$. Moreover, since $\alpha$ is increasing, $2 \alpha\left(V_{\infty}(\bz)\right) \leq 2 \alpha(2 \Psi(\bz))$. By defining $\bar{\alpha}(r)=2 \alpha(2 r)$, we ensure that $\bar{\alpha}$ is smooth, extended class $\mathcal{K}_{\infty}$, and for all $\bz \in \operatorname{Int}(\mathcal{S}_h)$, it holds that $\nabla \Psi(\bz)^\mathrm{T} F^a\left(\bz\right) \geq -\bar{\alpha}(\Psi(\bz))$. 
To ensure that the function $\Psi$ is well-defined on the closure of $\mathcal{S}_h$, we extend $\Psi$ to the boundary $\partial \mathcal{S}_h$ by defining
\begin{align}
    \Psi(\bz) = 0, \quad \forall \bz \in \partial \mathcal{S}_h.
    \label{eq:Psi_extension}
\end{align}
This extension guarantees that $\Psi$ remains continuous on $\mathcal{S}_h$ and retains its smoothness on $\operatorname{Int}(\mathcal{S}_h)$. With $\Psi$ now defined on $\mathcal{S}_h$, we consider $\Psi$ as our candidate UBF function, denoted by $h(\bx,\bu) = \Psi(\bz)$. From the properties established, $h$ satisfies $h(\bx,\bu) = \Psi(\bz) > 0$ for all $\bz \in \operatorname{Int}(\mathcal{S}_h)$
   since $\Psi(\bz) > \frac{1}{2} V_{\infty}(\bz) > 0$ in the interior of $\mathcal{S}_h$. In addition, $h(\bx,\bu) = \Psi(\bz) = 0$ for all $\bz \in \partial \mathcal{S}_h$
   by the extension in \eqref{eq:Psi_extension}. Consequently, we have
   \begin{align}
       \nabla h(\bx,\bu)^\mathrm{T} F^a\left(\bz\right) \geq -\bar{\alpha}(h(\bx,\bu)), \;\; \forall\;\; \bz \in \operatorname{Int}(\mathcal{S}_h).
       \label{eq:h_ubf_condition}
   \end{align}
Thus, $h$ satisfies the UBF condition within the interior of $\mathcal{S}_h$. Moreover, since $h$ is continuous on $\mathcal{S}_h$ and differentiable on $\operatorname{Int}(\mathcal{S}_h)$, we can conclude that $h$ is a valid UBF for the system \eqref{eqn:augmented_system} under the control law $\dot{\bu}=\boldsymbol{\tau}(\bx,\bu)$.

}
\end{proof}
\begin{remark}
    \normalfont Note that the sufficient condition presented in Theorem \ref{thm:existence_ubf} is reasonable in the sense that, the safe control design problem would be well posed if there exists a safe controller. This is because synthesizing safe and stabilizing controllers via UBF makes sense only if there exists a controller that ensures the forward invariance of the set defined by the specification $\mathcal{A}_s$ \eqref{eqn:specification_with_stability}. 
\end{remark}
\begin{figure}
\begin{tikzpicture}
    \input{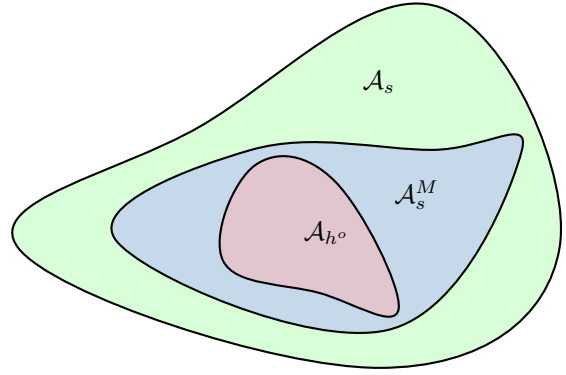}
\end{tikzpicture}
\caption{The depiction of sets $\mathcal{A}_s$, $\mathcal{A}^M_s$ and $\mathcal{A}_{h^o}$ discussed in Section \ref{subsec:high_order_ubf}}
\label{fig:sets_high_order}
\end{figure}
\subsection{UBF for systems with higher relative degrees\label{subsec:high_order_ubf}}
This section is motivated from the two current limitations of using UBF-QP. The first limitation, is that at least one of the individual barrier functions $h_i$ ($i\in\mathcal{I}_N$), may have a relative degree greater than one. Second, is that, there might not exist a control input such that the UBF condition \eqref{eqn:ubf_condition} is satisfied. In other words, $\boldsymbol{p}^h(\bx,\bu)=0$ (in \eqref{eqn:ph}) does not imply that $q^h(\bx,\bu)\geq 0$ (in \eqref{eqn:qh}) for $(\bx,\bu)\in\mathcal{S}_h$. To address these two limitations, in this section, we present the notion of HO-UBF as follows:

    
Let the function $h_i(\bx)$ that characterizes the safe set $\mathcal{S}_i=\{\bx\in\mathbb{R}^n\,|\;h_i(\bx)\geq 0\}$ be of relative degree $m_i(\geq 1)$ where $i\in\mathcal{P}_\bx\cup\mathcal{Q}_\bx$. If $h_i(\bx)$ are $m_i$ times continuously differentiable, we define a sequence of functions  as follows:
\begin{subequations}
    \begin{align}
&\Phi^j_i(\bx)=\dot{\psi}^{j-1}_i(\bx)+\alpha_i(\psi^{j-1}_i(\bx)),\quad \Phi^1_i(\bx)=h_i(\bx)\\
&\psi^j_i(\bx)\geq \Phi^j_i(\bx),\quad j\in \mathcal{I}_{(1,m_i)},\quad i\in\mathcal{P}_\bx\cup\mathcal{Q}_\bx\label{eqn:less_conservative_condition_houbf}
\end{align}
\end{subequations}
where $\alpha_i$ is a class $\mathcal{K}_\infty$ function. Furthermore, we define the set $\mathcal{S}^{m_i}_i$ as follows:
\begin{align}
\mathcal{S}^{j}_i=\begin{cases}
    \{\bx\in\mathbb{R}^n|\;\Phi^{j}_i(\bx)\geq 0\},\quad i\in\mathcal{P}_\bx\cup\mathcal{Q}_\bx,\quad j\in\mathcal{I}_{(1,m_i)}\\
    \{\bu\in\mathbb{R}^m|\;h_i(\bu)\geq 0\},\quad i\in\mathcal{P}_\bu\cup\mathcal{Q}_\bu
    \end{cases}
\label{eqn:sets_high_order_ubf}
\end{align}
Consequently, the set $\mathcal{S}_i$ is defined by:
\begin{align}
    \mathcal{S}_i=\begin{cases}
   \cap_{j=1}^{m_i} \mathcal{S}_i^j,\quad\quad\quad\quad\quad\quad\quad\; i\in\mathcal{P}_\bx\cup\mathcal{Q}_\bx,\;\; j\in\mathcal{I}_{(1,m_i)}\\
    \{\bu\in\mathbb{R}^m|\;h_i(\bu)\geq 0\},\quad i\in\mathcal{P}_\bu\cup\mathcal{Q}_\bu
\end{cases}\label{eqn:modified_high_set}
\end{align}
Finally, the modified set $\mathcal{A}^M_{s}$ is defined by
\begin{align}
   \mathcal{A}^M_s=(\mathcal{S}\cap\mathcal{U})\cap\mathcal{S}_V,
\label{eqn:modified_state_and_input_constraint_set_complex}
\end{align}
where the sets $\mathcal{S}$ and $\mathcal{U}$ are given by
\begin{align}
    &\mathcal{S}=\oplus_{i=1}^{N_x} \mathcal{S}_i,\quad\mathcal{U}=\oplus_{i=1}^{N_u} \{\bu\in\mathbb{R}^m|\;h_i(\bu)\geq 0\}\nonumber
\end{align}
where $\mathcal{S}_i$ is defined in \eqref{eqn:modified_high_set}.
 If at least one of $m_i>1$ for any $i\in\mathcal{P}_\bx\cup\mathcal{Q}_\bx$, then the UBF is defined as in \eqref{eqn:ubf} with the only difference being that $h_i(\bx)$ is replaced by $\Phi^{m_i}_i(\bx)$ and is given by 
 \begin{align}
&h^o(\bx,\bu)=c_2\left(H_{N+1}(\bx,\bu)  \right)+c_2(b_N)
\label{eqn:ubf_equation_for_high_order}
    \end{align}
    where $b_N=\left({\prod_{i=1}^{|\mathcal{P}'_N|}\;|\mathcal{P}'_N(i)|}\right)^{-1}$, and $H_{N+1}(\bx,\bu)$ is recursively defined by
    \begin{subequations}
    \begin{align}
    & H_0(\bx,\bu)=\left\{\begin{array}{ll}
     &c_1(-\beta( \dot{V}(\bx)-P(\bx))), \quad\quad\; \text{if } i=N+1  \\
     & c_1(\beta \Phi^{m_i}_i(\bx)), \quad \quad\quad\quad\quad\quad\;\text{if } i\in \ix\\
     &c_1(\beta h_i(\bu)), \quad\quad\quad\quad\quad\quad\;\;\;\; \text{if } i\in \iu
\end{array}\right.\\
&H_{j+1}(\bx,\bu)\nonumber\\
&=\left\{\begin{array}{ll} 
H_j(\bx,\bu)+c_1(\dot{V}(\bx)), \quad\quad\quad\quad\quad\;\;\;\;\;\;\; j=N+1\\
H_j(\bx,\bu)+c_1(\Phi^{m_{j+1}}_{j+1}(\bx)), \quad\quad\quad\quad\;\;\;\; j\in \pn\\
\left((H_j(\bx,\bu))^{-1}+(c_1(\Phi^{m_{j+1}}_{j+1}(\bx)))^{-1}\right)^{-1}, \quad\; j\in \qn\\
H_j(\bx,\bu)+c_1(h_{j+1}(\bu)), \quad\quad\quad\quad\quad\;\; j\in \pu\\
\left((H_j(\bx,\bu))^{-1}+(c_1(h_{j+1}(\bu)))^{-1}\right)^{-1}, \quad\quad j\in \qu\nonumber
\end{array}\right.
\end{align}
\label{eqn:ubf_math_equation_ho}
    \end{subequations}
    where $\beta>0$. Finally, the set $\mathcal{A}_{h^o}$ is defined by
\begin{align}
\mathcal{A}_{h^o}:=\{(\bx,\bu)\in\mathbb{R}^n\times\mathbb{R}^m\;\mid h^o(\bx,\bu)\geq 0\}
\end{align}
Using \eqref{eqn:modified_high_set}, \eqref{eqn:modified_state_and_input_constraint_set_complex} and Theorem \ref{thm:subset}, it can be shown that $\mathcal{A}_{h^o}\subseteq\mathcal{A}^M_{s}\subseteq\mathcal{A}_s$.


{
\begin{definition}
    \normalfont \textbf{(High Order UBF)} Let $h^o$ be defined as in \eqref{eqn:ubf_equation_for_high_order}. Consider the following two scenarios. First, $p^{h^0}(\bx,\bu)=0$ if and only if $q^{h^o}(\bx,\bu)\geq 0$. Then $\Pi^0(\bx,\bu)=h^o(\bx,\bu)$ is a High Order UBF (HO-UBF), if there exists a class $\mathcal{K}_\infty$ function such that 
    \begin{align}
       \dot{h}^o(\bx,\bu)\geq -\alpha\left(h^o(\bx,\bu)\right) 
    \end{align}
    Second, if this condition does not hold true (i.e., $p^{h^0}(\bx,\bu)=0 \iff q^{h^o}(\bx,\bu)\geq 0$ ), we define the sequence of functions $\Pi^i$ as follows 
    \begin{align}
       & \Pi^0(\bx,\bu)={h}^o(\bx,\bu),\nonumber\\
        & \Pi^i(\bx,\bu)=\dot{\Pi}^{i-1}(\bx,\bu)+\alpha^{i-1}({\Pi}^{i-1}(\bx,\bu)),\;\;\forall\;\; i\in\mathcal{I}_{(1,m)}
    \end{align}
    where $\alpha^i$ (for $i\in\mathcal{I}_{m-1}$) are class $\mathcal{K}_\infty$ functions and $m>1$. If $\Pi^m(\bx,\bu)$ is such that the following holds true
    \begin{align}
    p^{\Pi^m}(\bx,\bu)=0\;\Leftrightarrow\;q^{\Pi^m}(\bx,\bu)\geq 0
    \end{align}
    where $p^{\Pi^m}(\bx,\bu)$ and $q^{\Pi^m}(\bx,\bu)$ are given by
    \begin{align}
    &p^{\Pi^m}(\bx,\bu)=\frac{\partial \Pi^m(\bx,\bu)}{\partial\bu},\nonumber\\
        &q^{\Pi^m}(\bx,\bu)=\frac{\partial \Pi^m}{\partial\bx}F(\bx,\bu)+\frac{\partial \Pi^m}{\partial\bu}\boldsymbol{\tau}(\bx,\bu)\nonumber
    \end{align}
    Then, $\Pi^m(\bx,\bu)$ is a High Order UBF if there exists a class $\mathcal{K}_\infty$ function such that 
    \begin{align}
       \dot{\Pi}^m(\bx,\bu)\geq -\alpha\left(\Pi^m(\bx,\bu)\right) 
       \label{eqn:ho_ubf_condition}
    \end{align}
    \label{defn:ho_ubf}
\end{definition}
We define the set $\mathcal{A}_{\Pi^m}$ as follows:
\begin{align}
\mathcal{A}_{\Pi^m}=\cap_{i=0}^m\;\mathcal{R}^i_{\Pi^m}
    \label{eqn:intersect_ho_ubf}
\end{align}
where $\mathcal{R}^i_{\Pi^m}=\left\{(\bx,\bu)\in\mathbb{R}^n\times\mathbb{R}^m\;\mid \Pi^i(\bx,\bu)\geq 0\right\}$. 
\begin{assumption}
    \normalfont We assume that there exists a $m>1$ such that $\Pi^m(\bx,\bu)$ is a HO-UBF.
\end{assumption}

}
Subsequently, we define the set $K_{\text{HO-UBF}}$ as follows:
\begin{align}
    K_{\text{HO-UBF}}=\left\{\bu\in\mathbb{R}^m\;| \dot{\Pi}^m(\bx,\bu)\geq -\alpha^m(\Pi^m(\bx,\bu))\right\}
\end{align}
\begin{remark}
\normalfont For given barrier functions $h_i(\bx)$ ($i\in\mathcal{I}_N$), the notion of HO-UBF is more general than that proposed in \cite{xiao2021high_order_cbf_1,tan2021high_order_2} (Definition \ref{def:high_order_cbf}). Particularly, HO-UBF translates to the HO-CBF when the condition \eqref{eqn:less_conservative_condition_houbf} changes to $\psi^j_i(\bx)= \Phi^j_i(\bx) $ which is more conservative. This is illustrated by considering the following example
\end{remark}
\begin{example}
    \normalfont Consider the nonlinear system given by $\dot{x}_1 = -x_2^2 + 4$ and $\dot{x}_2 = x_1 + u$. Define the safe set as $\mathcal{S} = \{\bx=(x_1,\; x_2)\; | x_1 \geq 0\}$. The High Order CBF (HO-CBF) based on \cite{xiao2021high_order_cbf_1} (Definition \ref{def:high_order_cbf}) $\psi^2(\bx)$ is given by
\begin{align}
    \psi^1(\bx) = 1 - \mathrm{e}^{-x_1}, \quad \psi^2(\bx) = \dot{x}_1 + \alpha_1(\psi^1(\bx))
\end{align}
where $\alpha_1\in\mathcal{K}_\infty$. Since the relative degree $m_1=2$ and $\mathcal{P}_\bx=\{1\}$ (as the safety specification consists of only one set), the HO-UBF $\Phi^2_i(\bx) \geq \psi^i(\bx)$, is defined as
\begin{align}
 \Phi^1_1(\bx) = 1 - \mathrm{e}^{-x_1}, \;\;  \Phi^2_1(\bx) = \dot{\psi}^1_1(\bx) + \alpha_1(\psi^1_1(\bx)), \nonumber
\end{align}
where $\psi^1_1(\bx) \geq \Phi^1_1(\bx)$. Choosing $\psi^1_1(\bx) = 1 + \mathrm{e}^{-x_1} \geq \Phi^1_1(\bx)$ leads to $\Phi^2_1(\bx) = -\mathrm{e}^{-x_1}(-x_2^2 + 4)$. Consequently, the sets $\mathcal{S}^\psi$ and $\mathcal{S}^\Phi$ are defined as
\begin{align}
&\mathcal{S}^\psi = \{\bx | \psi^1(\bx) \geq 0 \text{ and } \psi^2(\bx) \geq 0\}\nonumber\\
&\quad\;\; =\{\bx | x_1\geq 0,\;x_2\in\mathbb{R}\}\cap \{\bx | x_1\in\mathbb{R},\;x_2\in[-2,2]\}\nonumber\\
&\quad \;\;= \{\bx | x_1\geq 0,\;\; x_2 \in [-2, 2]\}\subset \mathcal{S}\nonumber\\
&\mathcal{S}^\Phi = \{\bx | \Phi^1_1(\bx) \geq 0 \text{ and } \Phi^2_1(\bx) \geq 0\}\nonumber\\
&\quad\;\; =\{\bx | x_1\geq 0,\;x_2\in\mathbb{R}\}\cap\nonumber\\
&\quad\quad\;\;\;\{\bx | x_1\in\mathbb{R},\;x_2\in(\infty, -2] \cup [2, \infty)\}\nonumber\\
&\quad\;\;= \{\bx |x_1\geq 0,\; x_2 \in (\infty, -2] \cup [2, \infty)\}\subset\mathcal{S}\nonumber
\end{align}
\end{example}
Clearly, the safe set characterized by HO-UBF $\mathcal{S}^\Phi$ (i.e., $\{\bx\;|\;\mathcal{S}^\Phi\geq 0\}$) is larger than the set characterized by HO-CBF $\mathcal{S}^\psi$ (i.e., $\{\bx\;|\;\mathcal{S}^\psi\geq 0\}$) which highlights the conservative nature of HO-CBF \cite{xiao2021high_order_cbf_1,tan2021high_order_2}.

{\color{black}
\begin{theorem}
\normalfont Consider a High Order UBF $\Pi^m$ (Definition \eqref{defn:ho_ubf}) for $m\geq 0$. If $(\bx_0,\bu_0)\in\mathcal{A}_{\Pi^m}$, then any Lipschitz continuous controller $\bu\in K_{\Pi^m}$ ensures that $(\bx,\;\bu)\in\mathcal{A}_{\Pi^m}$ for $t\geq 0$.
\end{theorem}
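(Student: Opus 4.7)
The plan is to establish forward invariance by backward induction on the index $i$, starting from $\Pi^m$ and descending to $\Pi^0 = h^o$. The key observation is that by the recursive construction $\Pi^{i+1}(\bx,\bu) = \dot{\Pi}^i(\bx,\bu) + \alpha^i(\Pi^i(\bx,\bu))$, nonnegativity of $\Pi^{i+1}$ along a trajectory is exactly a differential inequality of the form $\dot{\Pi}^i \geq -\alpha^i(\Pi^i)$, which is precisely the hypothesis needed to apply the Comparison Lemma to $\Pi^i$. Before running the induction, I would first invoke the Lipschitz continuity of $\bu\in K_{\text{HO-UBF}}$ together with continuous differentiability of $F$ to guarantee (via Picard--Lindel\"of, as already noted in the remark following Theorem \ref{thm:ubf_qp_lipschitz_controller}) the existence and uniqueness of the closed-loop solution $(\bx(t),\bu(t))$ in a neighborhood of $(\bx_0,\bu_0)$.

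For the base case $i=m$, the HO-UBF condition \eqref{eqn:ho_ubf_condition} yields $\dot{\Pi}^m(\bx,\bu) \geq -\alpha^m(\Pi^m(\bx,\bu))$ along the closed-loop trajectory. Introducing the comparison ODE $\dot{y}(t) = -\alpha^m(y(t))$ with $y(0) = \Pi^m(\bx_0,\bu_0) \geq 0$, Lemma 4.4 of \cite{khalil1996nonlinear} gives $y(t) = \sigma(y(0),t)$ for some class $\mathcal{KL}$ function $\sigma$, hence $y(t) \geq 0$ for all $t \geq 0$. The Comparison Lemma then delivers $\Pi^m(\bx(t),\bu(t)) \geq y(t) \geq 0$, so $(\bx(t),\bu(t)) \in \mathcal{R}^m_{\Pi^m}$ for all $t \geq 0$.

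For the inductive step, assume $\Pi^{i+1}(\bx(t),\bu(t)) \geq 0$ for all $t \geq 0$. By the recursive definition,
\begin{equation*}
\dot{\Pi}^i(\bx(t),\bu(t)) = \Pi^{i+1}(\bx(t),\bu(t)) - \alpha^i(\Pi^i(\bx(t),\bu(t))) \geq -\alpha^i(\Pi^i(\bx(t),\bu(t))).
\end{equation*}
Since $(\bx_0,\bu_0)\in\mathcal{A}_{\Pi^m}$ implies $\Pi^i(\bx_0,\bu_0)\geq 0$, another application of the Comparison Lemma with the ODE $\dot{z}(t) = -\alpha^i(z(t))$, $z(0)=\Pi^i(\bx_0,\bu_0)$, yields $\Pi^i(\bx(t),\bu(t)) \geq z(t) \geq 0$ for all $t \geq 0$. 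Iterating this argument down from $i = m-1$ to $i = 0$ establishes that $(\bx(t),\bu(t)) \in \mathcal{R}^i_{\Pi^m}$ for every $i \in \{0,1,\dots,m\}$, and therefore $(\bx(t),\bu(t)) \in \mathcal{A}_{\Pi^m} = \bigcap_{i=0}^m \mathcal{R}^i_{\Pi^m}$ as required.

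The main subtlety I expect is ensuring that the solution remains well-defined globally for all $t \geq 0$, rather than only on a local interval. The Lipschitz continuity of $\bv^\star$ (Theorem \ref{thm:ubf_qp_lipschitz_controller}) combined with the fact that forward invariance of $\mathcal{A}_{\Pi^m} \subseteq \mathcal{A}_{h^o} \subseteq \mathcal{A}^M_s$ keeps the trajectory in a bounded set (by Assumption \ref{assumption:assum_2}, $\mathcal{S}_i$ and $\mathcal{U}_i$ are compact) should rule out finite-escape time, so standard extension arguments close the gap. The induction itself is clean once the base case is handled, so the Comparison Lemma is effectively doing all the heavy lifting at each step.
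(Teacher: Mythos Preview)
Your proposal is correct and follows essentially the same backward-induction strategy as the paper: use the HO-UBF condition to get $\Pi^m\geq 0$, then exploit the recursion $\Pi^{i+1}=\dot{\Pi}^i+\alpha^i(\Pi^i)$ so that nonnegativity of $\Pi^{i+1}$ yields $\dot{\Pi}^i\geq -\alpha^i(\Pi^i)$, and cascade down to $\Pi^0$. Your write-up is in fact more explicit than the paper's (you spell out the Comparison Lemma at each step and address well-posedness and finite-escape time, which the paper's own proof omits).
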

\begin{proof}
\normalfont Consider the first scenario where $p^{h^0}(\bx,\bu)=0$ if and only if $q^{h^o}(\bx,\bu)\geq 0$ for $h^o$ defined in \eqref{eqn:ubf_equation_for_high_order}. In that case $\Pi^0=h^o$ is a High Order UBF. Assume $(\bx_0,\bu_0)\in\mathcal{A}_{\Pi^m}$. Consequently, if there exists a Lipschitz controller $\bu\in K_{\text{HO-UBF}}(\bx,\bu)$, then the set $\mathcal{A}_{\Pi^m}$ is forward invariant. 

Now consider the second scenario where $\Pi^m$ is a High Order UBF and assume $(\bx_0,\bu_0)\in\mathcal{R}^m_{\Pi^m}$. The condition \eqref{eqn:ho_ubf_condition} implies that $\Pi^{m}(\bx,\bu)\geq 0$ for $(\bx,\bu)\in\mathcal{R}^m_{\Pi^m}$. Consequently,
\begin{align}
\dot{\Pi}^{m-1}_i(\bx,\bu)+\alpha^{m-1}\left(\Pi^{m-1}(\bx,\bu)\right) \geq 0,\quad i\in\mathcal{P}_\bx\cup\mathcal{Q}_\bx \nonumber
\end{align}
Given that $(\bx_0,\bu_0)\in \mathcal{A}_{\Pi^m}$, we know $\Pi^{m-1}\left(\bx_0,\bu_0\right)\geq 0$. Since $\Pi^{m-1}(\bx,\bu)$ represents $\Pi^{m-1}(t)$ explicitly, it follows that $\Pi^{m-1}(\bx,\bu)\geq 0$ for $t\geq 0$, and therefore $(\bx,\bu)\in\mathcal{R}^{m-1}_{\Pi^m}$. Following a similar iteration process,
we can demonstrate that $(\bx,\bu)\in\mathcal{R}^i_{\Pi^m}$ and for $t\geq 0$ and $i\in\mathcal{I}_m$. Thus, the sets $\mathcal{R}^0_{\Pi^m},\;\mathcal{R}^1_{\Pi^m},\ldots,\mathcal{R}^m_{\Pi^m}$ are individually forward invariant. As a result, their intersection $\mathcal{A}_{\Pi^m}$ \eqref{eqn:intersect_ho_ubf} is also forward invariant.
\end{proof}

}
\section{Numerical simulations\label{sec:numerical_simulations}}
In this section, we present numerical simulations to validate the effectiveness of the proposed Universal Barrier Function (UBF) and High Order Universal Barrier Function (HO-UBF) in ensuring safety and stability for nonlinear control systems subject to complex state and input constraints. We consider three examples: a single integrator system, a double integrator, and a quadrotor system. 

\subsection{Single Integrator System with UBF}

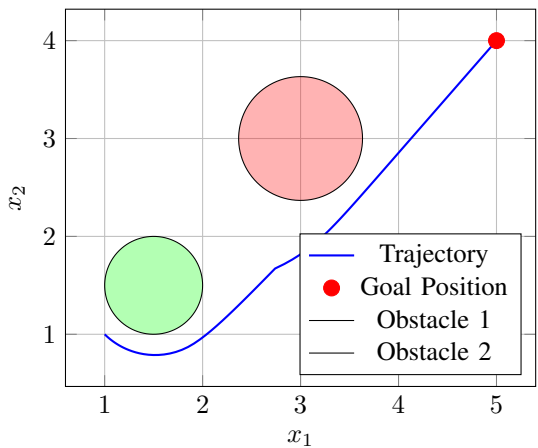
\begin{figure}
    \centering
\begin{tikzpicture}
    \begin{axis}[
    scale=0.6,
        xlabel={$x_1$},
        ylabel={$x_2$},
        title={},
        legend pos= south east,
        grid=both,
        axis equal image,
        width=12cm,
        height=12cm,
    ]

    \addplot[
        blue,
        thick,
    ] table [x=x1, y=x2, col sep=comma] {csv/trajectory.csv};
    \addlegendentry{Trajectory}

    \addplot[
        only marks,
        mark=*, 
        mark size=3pt,
        red,
    ] coordinates {(5,4)};
    \addlegendentry{Goal Position}

    \addplot[
        red,
        fill=red, 
        fill opacity=0.3,
        draw=black,
        domain=0:360,
        samples=100,
        parametric,
    ] 
    (
        {3 + 0.632455532*cos(x)}, 
        {3 + 0.632455532*sin(x)}
    );
    \addlegendentry{Obstacle 1}

    \addplot[
        green,
        fill=green, 
        fill opacity=0.3,
        draw=black,
        domain=0:360,
        samples=100,
        parametric,
    ] 
    (
        {1.5 + 0.5*cos(x)}, 
        {1.5 + 0.5*sin(x)}
    );
    \addlegendentry{Obstacle 2}

    \end{axis}
\end{tikzpicture}    \caption{Trajectory of single integrator system}
\label{fig:traj_single_integrator}
\end{figure}
\begin{figure}
    \centering
\begin{tikzpicture}
    \begin{axis}[
        scale=0.75,
        xlabel={Time (s)},
        ylabel={$\bu$},
        grid=both,
        width=10cm,
        height=8cm,
        legend pos=north east,
    ]
    \addplot[
        blue,
        thick,
    ] table [x=time, y=u1, col sep=comma] {csv/control_inputs.csv};
    \addlegendentry{$u_1$}
    
    \addplot[
        red,
        thick,
    ] table [x=time, y=u2, col sep=comma] {csv/control_inputs.csv};
    \addlegendentry{$u_2$}
    \end{axis}
\end{tikzpicture}



    \label{fig:inputs_time_single_integrator}
\end{figure}
\begin{figure}
    \centering
\begin{tikzpicture}
    \begin{axis}[
    scale=0.55,
        xlabel={Time (s)},
        ylabel={$||u||^2$},
        title={},
        grid=both,
        legend pos=north east,
        width=12cm,
        height=8cm,
    ]

    \addplot[
        blue,
        thick,
    ] table [x=time, y=control_input_norm, col sep=comma] {csv/control_input_norm.csv};
    \addlegendentry{$||u||^2$}

    \end{axis}
\end{tikzpicture}
\caption{Variation of input norm versus time}
    \label{fig:input_norm_single_integrator}
\end{figure}
\begin{figure}
    \centering
\begin{tikzpicture}
    \begin{axis}[
    scale=0.55,
        xlabel={Time (s)},
        ylabel={$h(\bx,\bu)$},
        title={},
        legend pos=north east,
        grid=both,
        width=12cm,
        height=8cm,
    ]

    \addplot[
        blue,
        thick,
    ] table [x=time, y=h1, col sep=comma] {csv/ubfs.csv};



    \end{axis}
\end{tikzpicture}
\caption{Variation of $h(\bx,\bu)$ versus time for the single integrator system}
    \label{fig:ubf_single_integrator}
\end{figure}

We first consider a robot with single integrator dynamics given by $\dot{\bx} = \bu$, where $\bx = [x_1, x_2]^\mathrm{T} \in \mathbb{R}^2$ represents the position of the robot in the plane, and $\bu = [u_1, u_2]^\mathrm{T} \in \mathbb{R}^2$ is the control input corresponding to its velocity vector.
The control objective is to steer the robot from $\bx_0 = [0.5, 1]^\mathrm{T}$ to $\bx_{\text{goal}} = [4.5, 4.5]^\mathrm{T}$ while avoiding collisions with obstacles and satisfying input constraints. The integral controller is computed via the Newton-Raphson flow (Section \ref{subsec:integral_controller_newton_raphson_flow}) where $T=0.55s$ in \eqref{eqn:y_tilde} and $\alpha=25$ in \eqref{eqn:int_controller_raphson_flow}. However, this integral controller does not inherently guarantee safety or input constraint satisfaction.

To enforce safety, we define three barrier functions corresponding to the obstacles and input constraints given by $h_1(\bx) = (x_1 - 3)^2 + (x_2 - 3)^2 - 0.4$, $h_2(\bx) = (x_1 - 1.5)^2 + (x_2 - 1.5)^2 - 0.25$, and $h_3(\bu) = 120 - u_1^2 - u_2^2$ (for input constraints).
For the UBF, we choose $\beta = 10$, $m_1=m_2=m_3=1$ and $m=2$ (see Section \ref{subsec:high_order_ubf}).
At each time step, we solve an UBF-QP to compute an auxiliary control input $\bv$ that modifies the nominal integral control to ensure safety and input constraint satisfaction. 
We simulate the system over a time horizon of $3$ seconds with a time step of $\Delta t = 0.01s$, resulting in $N = 300$ simulation steps. The initial control input is set to zero. At each time step, we compute the nominal control, evaluate the barrier functions, and solve the QP to find the optimal $\bv$. The control input and state are updated using forward Euler integration.

Figures \ref{fig:traj_single_integrator} and \ref{fig:input_norm_single_integrator} shows that the robot reaches the goal position while avoiding both obstacles and respecting the input constraints respectively. 



\subsection{Double Integrator System with HO-UBF}

We consider the double integrator dynamics given by
\begin{align}
\dot{x}_1=x_3,\;\;\dot{x}_2=x_4,\;\;\dot{x}_3=u_1,\;\dot{x}_4=u_2
\label{eqn:double_integrator_dynamics}
\end{align}
with $\bx=[x_1,\;x_2,\;x_3,\;x_4]^\mathrm{T}$ representing the position and velocity states in 2D, and $\bu=[u_1,\;u_2]^\mathrm{T}$ being the control input corresponding to accelerations in the $x_1$ and $x_2$ directions.
The control objective is similar to the previous example, i.e., to navigate the robot from the initial position $\bx_0 = [0.5, 1, 0, 0]^\mathrm{T}$ (starting from rest) to the goal position $\bx_{\text{goal}} = [4.5, 4.5]^\mathrm{T}$ while avoiding obstacles and respecting input constraints. For the integral controller, we choose $T=0.35s$ (in \eqref{eqn:y_tilde}) and $\alpha=35$ (in \eqref{eqn:int_controller_raphson_flow}). 

The barrier functions $h_1(\bx)$, $h_2(\bx)$, and $h_3(\bu)$ are used. For UBF, we choose $\beta = 20$.
Note that, for \eqref{eqn:double_integrator_dynamics}, the two individual barrier functions are of relative degree two with respect to the control input. To effectively enforce safety, we implement the HO-UBF with an order of $m_1 = 2$, $m_2=2$, $m_3=1$ and $m=2$ (see Section \ref{subsec:high_order_ubf}).
At each time step, we solve the UBF-QP to compute the auxiliary control input $\bv$ to satisfy the safety constraints. 
We simulate the system over a time horizon of $30s$ with a time step of $\Delta t = 0.001s$, resulting in $N = 30,000$ simulation steps.

Fig. \ref{fig:traj_double_integrator} shows the trajectory of the double integrator system \eqref{eqn:double_integrator_dynamics} in the presence of two obstacles. In addition, as shown in Fig. \ref{fig:input_norm_double_integrator}, the input constraints are respected.
\begin{figure}
    \centering
\begin{tikzpicture}
    \begin{axis}[
    scale=0.6,
        xlabel={$x_1$},
        ylabel={$x_2$},
        title={},
        legend pos= south east,
        grid=both,
        axis equal image,
        width=12cm,
        height=12cm,
    ]

    \addplot[
        blue,
        thick,
    ] table [x=x1, y=x2, col sep=comma] {csv/trajectory_double_int.csv};
    \addlegendentry{Trajectory}

    \addplot[
        only marks,
        mark=*, 
        mark size=3pt,
        red,
    ] coordinates {(5,4)};
    \addlegendentry{Goal Position}

    \addplot[
        red,
        fill=red, 
        fill opacity=0.3,
        draw=black,
        domain=0:360,
        samples=100,
        parametric,
    ] 
    (
        {3 + 0.632455532*cos(x)}, 
        {3 + 0.632455532*sin(x)}
    );
    \addlegendentry{Obstacle 1}

    \addplot[
        green,
        fill=green, 
        fill opacity=0.3,
        draw=black,
        domain=0:360,
        samples=100,
        parametric,
    ] 
    (
        {1.5 + 0.5*cos(x)}, 
        {1.5 + 0.5*sin(x)}
    );
    \addlegendentry{Obstacle 2}

    \end{axis}
\end{tikzpicture}    \caption{Trajectory of double integrator system}
    \label{fig:traj_double_integrator}
\end{figure}

\begin{figure}
    \centering
\begin{tikzpicture}
    \begin{axis}[
    scale=0.75,
        xlabel={Time (s)},
        ylabel={$u_1$},
        title={},
        grid=both,
        width=10cm,
        height=6cm,
        legend pos=north east,
    ]
    \addplot[
        blue,
        thick,
    ] table [x=time, y=u1, col sep=comma] {csv/control_inputs_double_int.csv};
    \addlegendentry{$u_1$}
    \addplot[
        red,
        thick,
    ] table [x=time, y=u2, col sep=comma] {csv/control_inputs_double_int.csv};
    \addlegendentry{$u_2$}
    \end{axis}


\end{tikzpicture}    \caption{Variation of control input versus time for the double integrator system}
    \label{fig:inputs_time_double_integrator}
\end{figure}

\begin{figure}
    \centering
\begin{tikzpicture}
    \begin{axis}[
    scale=0.55,
        xlabel={Time (s)},
        ylabel={$||u||^2$},
        title={},
        grid=both,
        legend pos=north east,
        width=12cm,
        height=8cm,
    ]

    \addplot[
        blue,
        thick,
    ] table [x=time, y=control_input_norm, col sep=comma] {csv/control_input_norm_double_int.csv};
    \addlegendentry{$||u||^2$}

    \end{axis}
\end{tikzpicture}
\caption{Norm of control input versus time}
    \label{fig:input_norm_double_integrator}
\end{figure}
\subsection{Quadrotor System}
Finally, we consider the quadrotor system where the dynamics is given by
\begin{align}
\begin{bmatrix}
\dot{\bx}_p \\
\dot{\bx}_\theta \\
\dot{\bx}_v \\
\dot{\bx}_\omega
\end{bmatrix} &=
\begin{bmatrix}
\bx_v \\
\bx_\omega \\
\frac{u_1}{m_q}R(\bx_\theta)e_3 - ge_3 \\
I^{-1}(\bu_{2:4} - \bx_\omega \times I\bx_\omega)
\end{bmatrix}
\label{eqn:quadrotor_dynamics}
\end{align}
\begin{figure}
    \centering
\begin{tikzpicture}
    \begin{axis}[
    scale=0.5,
        xlabel={Time (s)},
        ylabel={$||\bu||^2$},
        title={},
        grid=both,
        legend pos=north east,
        width=15cm,
        height=8cm,
    ]

    \addplot[
        blue,
        thick,
    ] table [x=time, y=control_input_norm, col sep=comma] {csv/control_input_norm_quad.csv};
    \addlegendentry{$||u||$}

    \end{axis}
\end{tikzpicture}
\caption{Norm of control input versus time}
    \label{fig:input_norm_quad}
\end{figure}
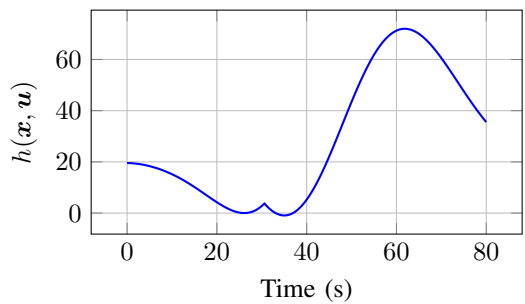
\begin{figure}
    \centering
\begin{tikzpicture}
    \begin{axis}[
    scale=0.55,
        xlabel={Time (s)},
        ylabel={$h(\bx,\bu)$},
        title={},
        legend pos=north east,
        grid=both,
        width=12cm,
        height=7cm,
    ]

    \addplot[
        blue,
        thick,
    ] table [x=time, y=h1, col sep=comma] {csv/ubfs_quad.csv};



    \end{axis}
\end{tikzpicture}
\caption{Variation of $h(\bx,\bu)$ versus time for the quadrotor system}
    \label{fig:ubf_quadrotor}
\end{figure}
where $\bx_p=[x,\;y,\;z]^\mathrm{T}$ and $\bx_v=[v_x,\;v_y,\;v_z]^\mathrm{T}$ represent the quadrotor's position and velocity, respectively. The orientation and angular velocity are denoted by $\bx_\theta=[\phi,\;\theta,\;\psi]^\mathrm{T}$ and $\bx_\omega=[\omega_x,\;\omega_y,\;\omega_z]^\mathrm{T}$. The mass of the quadrotor is represented by $m_q$, while $R(\bx_\theta)$ is the rotation matrix that transforms body-fixed coordinates to inertial coordinates. The vector $e_3 = [0, 0, 1]^\mathrm{T}$ represents the unit vector, $g$ denotes the gravitational acceleration and $I$ is the inertia matrix of the quadrotor. The control input vector $\bu = [u_1, u_2, u_3, u_4]^\mathrm{T}$ consists of four components, where $u_1$ represents the total thrust, and $\bu_{2:4} = [u_2, u_3, u_4]^\mathrm{T}$ represents the moments applied to the quadrotor.
The simulation was run for $80s$ with a time step of $0.005s$.
To construct the UBF, we choose the following barrier functions $h_1(\bx) = (x_1 - 3)^2 + (x_2 - 3)^2 +(x_3-3)^2 - 0.4$, $h_2(\bx) = (x_1 - 1.5)^2 + (x_2 - 1.5)^2 +(x_3-3)^2 - 0.25$, and $h_3(\bu) = 200 - u_1^2 - u_2^2$ (for input constraints).
The initial state was set to $\bx_0 = [0, 0, 0.5, 0, 0, 0, 0, 0, 0, 0, 0, 0]^\mathrm{T}$ and the goal state to be $\bx_g = [5, 5, 5, 0, 0, 0, 0, 0, 0, 0, 0, 0]^\mathrm{T}$.
We used an integral controller gain of $\alpha = 25.0$ and set the UBF orders to be $m_1=2$, $m_2=2$, $m_3=1$ and $m=1$ (see Section \ref{subsec:high_order_ubf}). The $K_\infty$ function for UBF is set to $\alpha(h)=3h$.
\begin{figure}
    \centering
\begin{tikzpicture}
    \begin{axis}[
    scale=0.5,
        view={60}{30}, 
        xlabel={$x$},
        ylabel={$y$},
        zlabel={$z$},
        title={},
        legend pos=south east,
        grid=both,
        width=15cm,
        height=12cm,
        xmin=-1, xmax=7,
        ymin=-1, ymax=7,
        zmin=0, zmax=6,
    ]

    \addplot3[
        blue,
        thick,
        mark=none,
        filter discard warning=false,
        filter point/.code={
            \ifnum\coordindex>0
                \pgfmathtruncatemacro{\modulus}{int(mod(\coordindex,100))}
                \ifnum\modulus=0
                \else
                    \def\pgfplotspointmeta{discard}
                \fi
            \fi
        }
    ] table [
        x=x, 
        y=y, 
        z=z, 
        col sep=comma
    ] {csv/trajectory_quad.csv};
    \addlegendentry{Trajectory}

    \addplot3[
        only marks,
        mark=*,
        mark size=3pt,
        red,
    ] coordinates {(5,5,5)};
    \addlegendentry{Goal Position}

    \addplot3[
        red,
        fill=red,
        fill opacity=0.3,
        draw=black,
        samples=20,
        samples y=10,
        domain=0:360,
        y domain=0:180,
    ]
    ({3 + 0.632455532*sin(y)*cos(x)},
     {3 + 0.632455532*sin(y)*sin(x)},
     {3 + 0.632455532*cos(y)});
    \addlegendentry{Obstacle 1}

    \addplot3[
        green,
        fill=green,
        fill opacity=0.3,
        draw=black,
        samples=20,
        samples y=10,
        domain=0:360,
        y domain=0:180,
    ]
    ({1.5 + 0.5*sin(y)*cos(x)},
     {1.5 + 0.5*sin(y)*sin(x)},
     {2.0 + 0.5*cos(y)});
    \addlegendentry{Obstacle 2}


    \end{axis}
\end{tikzpicture}    \caption{Trajectory of quadrotor system}
    \label{fig:traj_quad}
\end{figure}
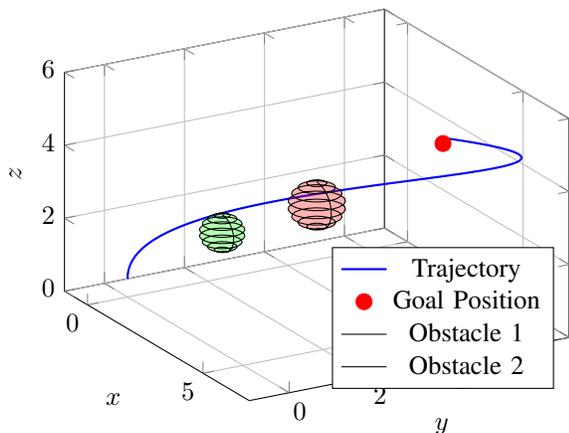


Fig. \ref{fig:traj_quad} shows the 3D trajectory of the quadrotor, demonstrating the UBF-QP controller's ability to navigate in 3D space while avoiding spherical obstacles. As seen from Figs. \ref{fig:input_norm_quad} , the input constraints are respected only if the barrier function for inputs is used while constructing a UBF.

\section{Conclusion\label{sec:conclusion}}
In this paper, we proposed the Universal Barrier Function (UBF), a single scalar-valued, continuously differentiable function designed based on which one can design controllers that can account for both safety and stability for controlled nonlinear systems subject to input constraints. Next, we proposed UBF-based quadratic programs (UBF-QP) to synthesize safe and stabilizing control inputs while satisfying complex state and input constraints. This approach is further extended to systems with higher relative degrees. Future work will include addressing the challenges of deploying the UBF-QP based controllers in the real world. 

\bibliography{main.bib}

\end{document}